\DeclarePairedDelimiter\norm\lVert\rVert
\newcites{APP}{References}
\DeclareMathOperator*{\argmin}{arg\,min}
\newcommand{\bbR}{\mathbb{R}}
\newcommand{\bbN}{\mathbb{N}}
\newcommand{\cH}{\mathcal{H}}
\newcommand{\cP}{\mathcal{P}}
\newcommand{\iid}{\stackrel{\mathrm{iid}}{\sim}}
\newcommand{\dto}{\stackrel{d}{\to}}
\newcommand{\bbP}{\mathbb{P}}
\newcommand{\bbE}{\mathbb{E}}
\newcommand{\cov}{\text{Cov}}
\newcommand{\var}{\text{Var}}
\newcommand{\cN}{\mathcal{N}}
\theoremstyle{definition}
\newtheorem{definition}{Definition}[section]
\newtheorem{proposition}{Proposition}[section]
\newtheorem{theorem}{Theorem}[section]
\newtheorem{lemma}{Lemma}[section]
\newtheorem{assumption}{Assumption}[section]
\newtheorem*{notation}{Notation}
\theoremstyle{plain}
\theoremstyle{remark}
\newtheorem{remark}{Remark}[section]
\newcommand{\bX}{\mathbf{X}}
\newcommand{\bY}{\mathbf{Y}}
\newcommand{\bZ}{\mathbf{Z}}
\newcommand{\cD}{\mathcal{D}}
\newcommand{\cL}{\mathcal{L}}
\newcommand{\cB}{\mathcal{B}}
\begin{document}

\title{Sparse Causal Effect Estimation using Two-Sample Summary Statistics in the Presence of Unmeasured Confounding}

\author[1]{Shimeng Huang}
\author[1]{Niklas Pfister}
% \author[2]{Jack Bowden}
\author[2,3,4]{Jack Bowden}
\affil[1]{Department of Mathematical Sciences, University of Copenhagen}
\affil[2]{Novo Nordisk Research Centre Oxford (NNRCO)}
\affil[3]{Biostatistics, Novo Nordisk}
\affil[4]{Exeter Medical School, University of Exeter}
% \affil[2]{Exeter Medical School, University of Exeter}

\date{October 18, 2024}

\maketitle

\begin{abstract}%
  Observational genome-wide association studies are now widely used for causal
inference in genetic epidemiology. To maintain privacy, such data is often only
publicly available as summary statistics, and often studies for the endogenous covariates
and the outcome are available separately. This has necessitated methods tailored
to two-sample summary statistics. Current state-of-the-art methods modify linear
instrumental variable (IV) regression---with genetic variants as
instruments---to account for unmeasured confounding. However, since the
endogenous covariates can be high dimensional, standard IV assumptions are generally
insufficient to identify all causal effects simultaneously. We ensure
identifiability by assuming the causal effects are sparse and propose a sparse
causal effect two-sample IV estimator, \texttt{spaceTSIV}, adapting the
\texttt{spaceIV} estimator by \citet{pfister2022identifiability} for two-sample
summary statistics. We provide two methods, based on L0- and L1-penalization,
respectively. We prove identifiability of the sparse causal effects in the
two-sample setting and consistency of \texttt{spaceTSIV}. The performance of
\texttt{spaceTSIV} is compared with existing two-sample IV methods in
simulations. Finally, we showcase our methods using real proteomic and
gene-expression data for drug-target discovery.

\end{abstract}

%%%%%%%%%%%%%%%%%%%%%%%%%%%%%%%%%%%%%%%%%%%%%%%%%%%%%%%%%%%%%%%%%%%%%%%%%%%%%%%%
\section{Introduction} \label{sec:intro}
%-------------------------------------------------------------------------------

The use of observational data to study the causal effects of covariate 
interventions on an outcome has seen a surge in popularity in many scientific areas. 
A primary example is genetic epidemiology, where a common research topic is to 
study the causal effects of genetically predictive phenotypic traits, 
such as a person's body mass index or their low density lipoprotein cholesterol, 
on downstream disease outcomes. 
This is often based on Mendelian randomization (MR)---that is, instrumental 
variable estimation (IV) with genetic variants 
being the instruments---to account for unmeasured confounding between the 
endogenous covariates and the outcome. However, due to privacy concerns, access to 
individual-level genetic data is highly regulated. 
To both preserve privacy and enable data sharing, public data repositories of 
genetic summary statistics are made available by various international 
genome-wide association study (GWAS) consortia. 
These summary statistics usually contain marginal effect estimates 
of single nucleotide polymorphisms (SNPs) on the phenotypic traits and 
disease outcomes, along with their standard errors, which are often themselves 
obtained from two separate GWAS. This is referred to as the ``two-sample summary 
statistics" setting. \citet{zhao2019two} discuss sufficient 
assumptions that enable consistent estimation under two-sample IV, 
specifically the homogeneity of the two samples. 
When the number of endogenous covariates under investigation is high dimensional 
or the instruments are highly correlated, a case in point being 
human gene expression phenotypes and genetic variants, there may be an 
insufficient number of strong and valid instruments to ensure the 
identifiability of the multivariable causal effects.

Lack of identifiability leads to poor estimation, or weak instrument bias. 
In the univariable two-sample summary statistics setting, 
\citet{bowden2019improving} develop heuristic weak-instrument robust 
inference strategies based on heterogeneity statistic 
estimating equations. Under the same setting, \citet{wang2022weak} further 
clarify the connection between these approaches and summary statistics 
analogues of the Anderson-Rubin (AR) test statistic \citep{anderson1949estimation} and Limited Information 
Maximum Likelihood (LIML).  
\citet{jingshu2020} further extend weak-instrument robust models to the 
multivariable case. Another way to circumvent the weak instrument 
problem is to employ principal component analysis (PCA).
Building on the work of \citet{batool2022disentangling}, \citet{patel2024robust} 
show how many individually weak variants could be fashioned into PCA scores 
with improved instrument strength.

An alternative strategy to tackle the lack of identifiability is to 
introduce sparsity assumption on the causal effects. 
This is often a reasonable assumption in MR studies, as it is usually the 
case that many endogenous traits do not have direct causal effects on the outcome. 
Under the assumptions of independent 
instruments and the number of instruments is no less than the number of 
covariates,
\citet{grant2022efficient} consider the use of L1 penalization on the causal effects
in multivariable MR models where one covariate is of special 
interest but the others are allowed to be penalized. In related works, \citet{rees2019robust}, 
\citet{zhao2020statistical} and \citet{grant2021pleiotropy} consider 
L1 penalization for individual instruments suspected to be invalid due 
to exclusion restriction violation, rather than penalization on the number 
of causal effects. 

In the one sample individual-level data setting, 
the identifiability conditions for sparse causal effects have been 
studied by \citet{pfister2022identifiability}, and they propose
a sparse causal effect estimator, \texttt{spaceIV}.
\citet{tang2023synthetic} also consider sparse causal effect 
identification and estimation under assumptions on the sparsity level and 
propose a synthetic two-stage regularized regression approach. 

We propose \texttt{spaceTSIV}, adapting the \texttt{spaceIV} 
estimator for two-sample summary statistics. We allow the IVs to be correlated 
by extending the adjustment method in \citet{wang2022weak}. 
Two specific approaches based on L0- and L1-penalization, respectively, 
are provided. We prove identifiability of the sparse causal effects 
and consistency of \texttt{spaceTSIV} under the two-sample summary 
statistics setting. We evaluate the performance of \texttt{spaceTSIV} 
with simulated data and compare it with existing (non-sparse) methods that 
work with two-sample summary statistics. 
Finally, we showcase our methods using proteomic and gene-expression data 
within the context of a drug-target discovery analysis.
Notation is summarized below and all proofs are provided in 
Supplementary Material~\ref{app:proofs}.

\begin{notation}
For all $k\in\mathbb{N}$, we define $[k]\coloneqq\{1,\ldots,k\}$ and for all 
$\beta\in\bbR^d$, we denote by $\operatorname{supp}(\beta) \coloneqq \left\{j \in [d]: \beta^j \neq 0 \right\}$ 
the set of non-zero components of $\beta$.
For an arbitrary matrix $A\in\mathbb{R}^{n\times m}$, we denote for all $i\in[n]$ and $j\in[m]$,
the $i$-th row of $A$ by $A_{i}$, the $j$-th column of $A$ by $A^{j}$, and the  $ij$-th entry of $A$ by $A_i^j$.
If $A$ is a square block matrix containing $k \times k$ square matrices of 
dimension $l\times l$, then $A^{[ij]}$ for all $i,j \in [k]$ denotes 
the $ij$-th block of $A$. 
\end{notation}

\section{Reduced form IV model and summary statistics} \label{sec:method}

We start from the conventional one-sample individual-level data setting 
and assume we observe $n$ independently and identically distributed (iid) 
observations $\{(X_i, Y_i, Z_i)\}_{i=1}^n\subseteq\mathbb{R}^d\times\mathbb{R}\times\mathbb{R}^m$, 
where $Y$ is a response variable, $X$ a vector of endogenous covariates, 
and $Z$ a vector of instruments. The IV model assumptions can then be 
expressed as a linear structural causal model (SCM) over these variables.\footnote{
The required assumptions can also be expressed via other causal models 
(e.g., potential outcomes). Not all causal implications of the model 
introduced here are strictly necessary, but to keep the presentation 
concise we avoid presenting the most general assumptions.} 
Formally, for all $i\in[n]$, we assume,
\begin{align} \label{eq:individual_scm}
\begin{split}
    X_i &\coloneqq AZ_i + BX_i + g(H_i, \nu_i^X) \\
    Y_i &\coloneqq X_i^\top \beta^* + h(H_i, \nu_i^Y),
\end{split}
\end{align}
where $H_i\in\bbR^q$ is a vector of unobserved variables, 
$g$ and $h$ are arbitrary measurable functions, 
$Z_i$, $h(H_i, \nu_i^X)$, and $g(H_i, \nu_i^Y)$ have mean $0$ and finite variance, 
and $\left\{Z_i, H_i, \nu_i^X, \nu_i^Y\right\}_{i=1}^n$ are jointly independent. 
The coefficient $\beta^*\in\bbR^d$ denotes the true causal effect of the covariates 
on the response, and the matrices $A\in\bbR^{d\times m}$ and $B\in\bbR^{d\times d}$ 
encode the other causal relations in the SCM, with $B$ being a strictly lower 
triangular matrix. 
The matrix $I_d - B$ is assumed to be invertible, where $I_d$ 
is the identity matrix of dimension $d$. 
Finally, we call the support of $\beta^*$ the parent set of $Y$ 
and denote it as $\text{PA}(Y)$, that is, 
$\text{PA}(Y) \coloneqq \operatorname{supp}(\beta^*)$. 
The SCM \eqref{eq:individual_scm} can also be expressed in what is called 
its reduced form by only considering how the instruments affect the covariates 
and the response. Formally, for all $i\in[n]$ the reduced form is given by
\begin{align} \label{eq:reduced_scm}
\begin{split}
    X_i &\coloneqq Z_i^\top\Pi + u_i^X \\
    Y_i &\coloneqq Z_i^\top\pi + u_i^Y,
\end{split}
\end{align}
where $\Pi \coloneqq A^\top (I_d - B)^{-\top} \in \bbR^{m \times d}$, 
$\pi \coloneqq \Pi\beta^* \in \bbR^m$, 
$u_i^X \coloneqq g(H_i,\nu_i^X)^\top (I_d - B)^{-\top}$, and 
$u_i^Y \coloneqq (u_i^X)^\top\beta^* + h(H_i,\nu_i)$.

In this work, we assume that we do not directly observe the individual-level 
data and instead only have access to summary statistics of partially 
observed paired data from two independent samples
$\{(Y_{ai}, Z_{ai})\}_{i=1}^{n_a}$ and $\{(X_{bi}, Z_{bi})\}_{i=1}^{n_b}$ of the 
SCM \eqref{eq:individual_scm}.

As discussed in Section~\ref{sec:intro}, this is often the  
case in MR studies utilizing summary statistics from two GWAS, one contains 
the associations between genetic variants and endogenous traits
(such as gene expression levels), and the other contains the 
associations between genetic variants and an outcome trait 
(such as a disease), are used to study the causal relationship between 
the endogenous traits and the outcome trait with genetic variants being 
the IVs.

There are two types of summary statistics that we focus on here. Firstly, 
the two-sample joint OLS summary statistics, which consist of estimates 
of the reduced form parameters in \eqref{eq:reduced_scm} and are formally 
defined as follows.

\begin{definition}[Two-sample joint OLS summary statistics] \label{def:joint_ols_sumstats}
Given two independent samples of observations 
$\{(Y_{ai}, Z_{ai})\}_{i=1}^{n_a}$ and $\{(X_{bi}, Z_{bi})\}_{i=1}^{n_b}$,
the \emph{two-sample joint OLS summary statistics} (joint summary statistics) are defined as the set of estimates 
\begin{equation*}
\cD^{\text{joint}}_{a,b} \coloneqq 
\left\{
\widehat\pi, \widehat\Sigma_\pi, 
\widehat\Pi, \widehat\Sigma_\Pi
\right\},
\end{equation*}
where 
$\widehat\pi \coloneqq (\bZ_a^\top\bZ_a)^{-1}\bZ_a^\top \bY_a \in\bbR^{m}$, 
$\widehat\Sigma_\pi \coloneqq \widehat\varepsilon_a^\top \widehat\varepsilon_a(\bZ_a^\top \bZ_a)^{-1} \in\bbR^{m\times m}$ 
with $\widehat\varepsilon_a \coloneqq \bY_a - \bZ_a\widehat\pi$, 
$\widehat\Pi \coloneqq (\bZ_b^\top\bZ_b)^{-1}\bZ_b^\top\bX_b \in\bbR^{m\times d}$,
and $\widehat\Sigma_\Pi\in\bbR^{md\times md}$ consists of $d\times d$ blocks of dimension $m\times m$ 
defined for all $k,l\in[d]$ by
$\widehat\Sigma_\Pi^{[kl]} \coloneqq (\widehat\varepsilon_b^k)^\top\widehat\varepsilon_b^l(\bZ_b^\top\bZ_b)^{-1}$ with
$\widehat\varepsilon_b^k \coloneqq \bX^k_b - \bZ_b \widehat\Pi^k$.
\end{definition}

Secondly, the two-sample marginal OLS summary statistics, which instead of capturing 
the joint effects described by the parameters in \eqref{eq:reduced_scm}, only 
contain marginal univariate effects.

\begin{definition}[Two-sample marginal OLS summary statistics] \label{def:marginal_ols_sumstats}
Given two independent samples of observations 
$\{(Y_{ai}, Z_{ai})\}_{i=1}^{n_a}$ and $\{(X_{bi}, Z_{bi})\}_{i=1}^{n_b}$,
the \emph{two-sample marginal OLS summary statistics} (marginal summary statistics) are defined as the set of estimates 
\begin{equation*}
\cD^{\text{marginal}}_{a, b} \coloneqq 
\left\{
\widehat  \eta, \widehat\sigma_ \eta^2, 
\widehat H, \widehat\sigma_H^2, \widehat{M}_{Z_a}, \widehat{M}_{Z_b}, \widehat{M}_X
\right\},
\end{equation*}
where $\widehat  \eta\in\bbR^m$, $\widehat\sigma_ \eta^2 \in \bbR^m$, 
$\widehat H \in\bbR^{m\times d}$, $\widehat\sigma_H^2 \in \bbR^{m\times d}$, 
and for all $j\in[m]$ and all $k\in[d]$, 
$\widehat  \eta_j \coloneqq (\bZ_a^j)^\top \bY_a / (\bZ_a^j)^\top \bZ_a^j$,
$\widehat\sigma_{ \eta,j}^2 \coloneqq (\widehat\varepsilon_a^j)^\top \widehat\varepsilon_a^j / \left((\bZ_a^j)^\top \bZ_a^j\right)$
with $\widehat\varepsilon_a^j \coloneqq \bY_a - \widehat  \eta_j \bZ_a^j$, 
$\widehat H_j^k \coloneqq (\bZ_b^j)^\top \bX_b^k / (\bZ_b^j)^\top \bZ_b^j$, and 
$(\widehat\sigma_{ H, j}^k)^2 \coloneqq (\widehat\varepsilon_{bj}^k)^\top \widehat\varepsilon_{bj}^k/ \left((\bZ_b^j)^\top \bZ_b^j\right)$
with $\widehat\varepsilon_{bj}^k \coloneqq \bX_b^k - \widehat H_j^k \bZ_b^j$. 
For both $s\in\{a,b\}$, 
let $D_{Z_s}$ be the diagonal matrix containing the diagonal elements of $\bZ_s^\top\bZ_s$, then 
$\widehat{M}_{Z_s} \coloneqq 
D_{Z_s}^{-1/2}\bZ_s^\top\bZ_s D_{Z_s}^{-1/2} \in\bbR^{m\times m}$ are the sample correlation 
matrices of $Z_s$ respectively. 
Similarly, let $D_{X}$ be the diagonal matrix containing the diagonal elements of $\bX_b^\top\bX_b$, 
then $\widehat{M}_X \coloneqq D_X^{-1/2}\bX_b^\top\bX_b D_X^{-1/2} \in\bbR^{d\times d}$
is the sample correlation matrix of $X_b$.
\end{definition}

Due to the close relation of the joint summary statistics with the reduced 
form model \eqref{eq:reduced_scm} it is easier to develop methods for the 
joint summary statistics. However, in most publicly available data 
(e.g., \citet{ukbb} and \citet{gwascatalog}) 
only the marginal summary statistics are available. Fortunately, it is 
possible to transform marginal summary statistics into joint summary statistics. 
This means that any theoretical developments that apply to one also apply to the other. 
The exact correspondence is given in the following proposition.

\begin{proposition}[Marginal to joint summary statistics] \label{prop:marginal_to_joint}
Assume we are given 
$\cD^{\text{marginal}}_{a, b}=\{\widehat  \eta, \widehat\sigma_ \eta^2, 
\widehat H, \widehat\sigma_ H^2,\widehat{M}_{Z_a}, \widehat{M}_{Z_b}, \widehat{M}_X
\}.$
Define diagonal matrices $D_a,D_b^{(1)},\ldots,D_b^{(m)}\in\mathbb{R}^{m\times m}$ such that 
for all $k,i\in[m]$, 
$(D_{a})_i^i\coloneqq(\widehat\sigma_{ \eta,i}^2 + (\widehat  \eta_i)^2)^{1/2}$
and 
$(D_{b}^{(k)})_i^i\coloneqq((\widehat\sigma_{H, i}^k)^2 + (\widehat H_i^k)^2)^{1/2}$.
Then it holds for all $k,l\in[d]$ that 

\noindent $\bullet$ $\widehat\pi = D_a(D_a\ \widehat{M}_{Z_a})^{-1}\ \widehat  \eta,$ \\
\noindent $\bullet$ $\widehat\Sigma_\pi = 
(1 - \widehat  \eta^\top D_a \widehat{M}_{Z_a}^{-1} D_a \widehat \eta) D_a\widehat{M}_{Z_a}^{-1}D_a$, \\
\noindent $\bullet$ $\widehat\Pi^k = D_b^{(k)}(D_b^{(k)}\ \widehat{M}_{Z_b})^{-1}\ \widehat H^k$, and \\
\noindent $\bullet$ $\widehat\Sigma_\Pi^{[kl]} = (\widehat{M}_{X,k}^l - \widehat H^{k \top} D_b^{(k)} \widehat{M}_{Z_b}^{-1} D_b^{(l)}\widehat H^l)D_b^{(k)}\widehat{M}_{Z_b}^{-1}D_b^{(l)}$. 
\end{proposition}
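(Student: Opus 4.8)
The plan is to reduce all four identities to a single elementary reconstruction: recovering the full Gram-matrix OLS quantities (which depend on $\bZ_s^\top\bZ_s$ jointly) from the per-instrument marginal regressions (which only see one column $\bZ_s^j$ at a time), together with the correlation matrices $\widehat M_{Z_s}$ and $\widehat M_X$. The four claims split into two structurally identical pairs: a sample-$a$ pair reconstructing $(\widehat\pi,\widehat\Sigma_\pi)$ and a sample-$b$ pair reconstructing $(\widehat\Pi^k,\widehat\Sigma_\Pi^{[kl]})$, and within each pair there is a point-estimate identity and a (co)variance identity. I would prove the sample-$a$ statements in full and then obtain the sample-$b$ statements by the same argument applied column-wise to $\bX_b$, handling the block-index bookkeeping at the end.

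The crux is one diagonal identity. First I record the trivial relations $\bZ_a^\top\bY_a = D_{Z_a}\widehat\eta$ and $\widehat M_{Z_a}^{-1} = D_{Z_a}^{1/2}(\bZ_a^\top\bZ_a)^{-1}D_{Z_a}^{1/2}$, both immediate from the definitions. The key step is to show that the normalizing matrix $D_a$ encodes exactly the total sum of squares of the response: expanding $(\widehat\varepsilon_a^i)^\top\widehat\varepsilon_a^i$ and using $\widehat\eta_i(\bZ_a^i)^\top\bZ_a^i = (\bZ_a^i)^\top\bY_a$ collapses the cross term and yields $\widehat\sigma_{\eta,i}^2 + (\widehat\eta_i)^2 = \bY_a^\top\bY_a/\big((\bZ_a^i)^\top\bZ_a^i\big)$, i.e. $D_a^2 = (\bY_a^\top\bY_a)\,D_{Z_a}^{-1}$. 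The analogous computation on sample $b$ gives $(D_b^{(k)})^2 = \big((\bX_b^k)^\top\bX_b^k\big)\,D_{Z_b}^{-1}$. I would also note $(\bX_b^k)^\top\bX_b^l = \widehat M_{X,k}^l\,\big((\bX_b^k)^\top\bX_b^k\big)^{1/2}\big((\bX_b^l)^\top\bX_b^l\big)^{1/2}$ straight from the definition of $\widehat M_X$.

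With these in hand the point-estimate identities are direct substitution: in the claimed expression for $\widehat\pi$ I replace $D_a$ by $(\bY_a^\top\bY_a)^{1/2}D_{Z_a}^{-1/2}$ and $\widehat M_{Z_a}^{-1}$ by $D_{Z_a}^{1/2}(\bZ_a^\top\bZ_a)^{-1}D_{Z_a}^{1/2}$; the response-norm scalars and the $D_{Z_a}^{\pm 1/2}$ factors cancel, leaving $(\bZ_a^\top\bZ_a)^{-1}D_{Z_a}\widehat\eta = (\bZ_a^\top\bZ_a)^{-1}\bZ_a^\top\bY_a = \widehat\pi$, and the same cancellation gives $\widehat\Pi^k$. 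For the (co)variance identities I first expand the residual inner products, using $\widehat\pi=(\bZ_a^\top\bZ_a)^{-1}\bZ_a^\top\bY_a$ to get $\widehat\varepsilon_a^\top\widehat\varepsilon_a = \bY_a^\top\bY_a - (\bZ_a^\top\bY_a)^\top(\bZ_a^\top\bZ_a)^{-1}\bZ_a^\top\bY_a$ and the cross-product analogue $(\widehat\varepsilon_b^k)^\top\widehat\varepsilon_b^l = (\bX_b^k)^\top\bX_b^l - (\bX_b^k)^\top\bZ_b(\bZ_b^\top\bZ_b)^{-1}\bZ_b^\top\bX_b^l$. Substituting the diagonal identity shows the leading matrix factor $D_a\widehat M_{Z_a}^{-1}D_a$ equals $(\bY_a^\top\bY_a)(\bZ_a^\top\bZ_a)^{-1}$, so it reproduces $\widehat\Sigma_\pi$ up to the scalar $\widehat\varepsilon_a^\top\widehat\varepsilon_a/(\bY_a^\top\bY_a)$, which I then rewrite in marginal form via $\bZ_a^\top\bY_a = D_{Z_a}\widehat\eta$ and the diagonal identity.

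The main obstacle I anticipate is purely the bookkeeping of the diagonal normalizations: I must track whether each $D_a$, $D_b^{(k)}$ enters to the power $+1$ or $-1$, since the leading factor carries $D_a$ while the scalar correction term carries the opposite power of the same normalization (the $D_{Z_a}^{1/2}$ coming from $\bZ_a^\top\bY_a = D_{Z_a}\widehat\eta$ turns into $D_a^{-1}$ after substitution). For claim 4 there is the additional complication of the two distinct column normalizations $D_b^{(k)}$ and $D_b^{(l)}$ and the block index $[kl]$, so I would verify that $D_b^{(k)}\widehat M_{Z_b}^{-1}D_b^{(l)} = \big((\bX_b^k)^\top\bX_b^k\big)^{1/2}\big((\bX_b^l)^\top\bX_b^l\big)^{1/2}(\bZ_b^\top\bZ_b)^{-1}$ combines correctly with the reconstruction of $(\bX_b^k)^\top\bX_b^l$ from $\widehat M_X$; this cross-normalization step is where an error in powers would most easily creep in.
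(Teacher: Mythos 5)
Your proposal is correct and follows essentially the same route as the paper's proof: the identical key identity (your $(D_b^{(k)})^2 = \big((\bX_b^k)^\top\bX_b^k\big)\,D_{Z_b}^{-1}$ is the paper's $D_b^{(k)} = D_{Z_b}^{-1/2}\big((\bX_b^k)^\top\bX_b^k\big)^{1/2}$), the same auxiliary relations $\bZ_b^\top\bX_b^k = D_{Z_b}\widehat H^k$ and $\widehat M_{Z_b}^{-1} = D_{Z_b}^{1/2}(\bZ_b^\top\bZ_b)^{-1}D_{Z_b}^{1/2}$, and the same residual expansions for the (co)variance claims. The only structural difference is cosmetic: you prove the sample-$a$ pair in full and transfer to sample $b$, whereas the paper proves the sample-$b$ pair ($\widehat\Pi^k$, $\widehat\Sigma_\Pi^{[kl]}$) and treats the sample-$a$ claims as the $d=1$ special case.

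One substantive remark: the power bookkeeping you flag as the main obstacle is not just a nuisance; it is exactly where the printed proposition slips. Carrying out your substitution as you describe gives $\widehat\varepsilon_a^\top\widehat\varepsilon_a/(\bY_a^\top\bY_a) = 1 - \widehat\eta^\top D_a^{-1}\widehat M_{Z_a}^{-1}D_a^{-1}\widehat\eta$ and, in block form, a scalar term $\widehat M_{X,k}^l - \widehat H^{k\top}(D_b^{(k)})^{-1}\widehat M_{Z_b}^{-1}(D_b^{(l)})^{-1}\widehat H^l$; that is, the correction scalars carry the \emph{inverse} normalizations, while only the trailing matrix factor carries $D_a\widehat M_{Z_a}^{-1}D_a = (\bY_a^\top\bY_a)(\bZ_a^\top\bZ_a)^{-1}$ (resp.\ $D_b^{(k)}\widehat M_{Z_b}^{-1}D_b^{(l)}$). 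The second and fourth bullets of the statement print the $+1$ powers in both places, and the final displayed equality of the paper's proof repeats this, even though the preceding ratio-form line forces the inverse powers inside the scalar. So the discrepancy you would encounter is a typo in the statement, not a gap in your argument; your more careful version derives the corrected formula.
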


In practice, one often does not observe both $\widehat{M}_{Z_{a}}$ and $\widehat{M}_{Z_b}$ and instead only observes a single estimate that converges to the correlation of $Z$. In such cases, it can be shown that using the same transformation as in Proposition~\ref{prop:marginal_to_joint} is asymptotically equivalent to working with the joint summary statistics.

\subsection{Identifiability via sparsity under the reduced IV model}

For the causal effect $\beta^*$ to be identified,
the number of instruments is usually 
required to be no less than the number of covariates. In the one-sample 
individual-level data setting, this can be seen from the 
solution space based on the IV moment condition under the SCM \eqref{eq:individual_scm}, 
\begin{equation} \label{eq:id_solution_space}
\cB^{\text{ind}} = \left\{\beta\in\bbR^d: \bbE(Z\,Y) = \bbE(Z X^\top) \beta)\right\}.
\end{equation}
This space is in general non-degenerate if the dimension of the instruments 
is larger than the number of covariates.
When the causal effect is sparse, however, 
it is possible to allow more covariates than instruments. 

\begin{figure}[!thb]
\centering
\resizebox{0.3\textwidth}{!}{
\begin{tikzpicture}
    % Style definitions
    \tikzstyle{var}=[circle, draw, thick, minimum size=6mm, font=\small, inner sep=0]
    \tikzstyle{arrow}=[-latex, thick]
    \tikzstyle{doublearrow}=[latex-latex, thick]
    \tikzstyle{dashedarrow}=[-latex, thick, dashed]

    % Nodes for I_1 to I_J (G variables in sketch)
    % \node[var] (I4) at (1, 4) {$Z_4$};
    % \node[var] (I5) at (1, 0) {$Z_5$};
    \node[var] (I1) at (0, 3) {$Z_1^a$};
    \node[var] (I2) at (0, 2) {$Z_2^a$};
    \node[var] (I3) at (0, 1) {$Z_3^a$};

    % Nodes for X_1 to X_R
    \node[var, fill=gray!30] (X4) at (3, 2.5) {$X_4^a$};
    \node[var, fill=gray!30] (X5) at (3, 1.5) {$X_5^a$};
    \node[var, fill=gray!30] (X1) at (1.5, 3) {$X_1^a$};
    \node[var, fill=gray!30] (X2) at (1.5, 2) {$X_2^a$};
    \node[var, fill=gray!30] (X3) at (1.5, 1) {$X_3^a$};

    % Node for H
    % \node[var] (H) at (3.2, 3) {$H$};

    % Node for Y
    \node[var] (Y) at (4.5, 2) {$Y^a$};

    % % Double-arrow arcs between I variables (G variables in sketch)
    % \draw[doublearrow, bend right=45] (I1) to (I3);
    % \draw[doublearrow, bend left=45] (I2) to (I5);
    % \draw[doublearrow, bend right=45] (I3) to (I4);
    % \draw[doublearrow, bend right=45] (I4) to (I5);
    % \draw[doublearrow, bend right=45] (I3) to (I5);

    % % Arrows from some I variables to some X variables
    \draw[arrow] (I1) -- (X1);
    \draw[arrow] (I2) -- (X2);
    \draw[arrow] (I3) -- (X3);
    % \draw[arrow] (I4) -- (X4);
    % \draw[arrow] (I5) -- (X5);
    % \draw[dashedarrow, bend left] (I4) to (Y);
    % \draw[dashedarrow, bend right] (I5) to (Y);

    % % H points to both X1 and Y
    % \draw[arrow] (H) -- (X1);
    % \draw[arrow] (H) -- (X2);
    % \draw[arrow] (H) -- (XR);
    % \draw[arrow] (H) -- (Y);

    % Some X points to X
    \draw[arrow] (X1) -- (X4);
    \draw[arrow] (X2) -- (X4);
    \draw[arrow] (X2) -- (X5);
    \draw[arrow] (X3) -- (X5);

    % Some X points to Y
    \draw[arrow] (X4) -- (Y);
    \draw[arrow] (X5) -- (Y);

\end{tikzpicture}
}
\hspace{1cm}
\resizebox{0.3\textwidth}{!}{
\begin{tikzpicture}
    % Style definitions
    \tikzstyle{var}=[circle, draw, thick, minimum size=6mm, font=\small, inner sep=0]
    \tikzstyle{arrow}=[-latex, thick]
    \tikzstyle{doublearrow}=[latex-latex, thick]
    \tikzstyle{dashedarrow}=[-latex, thick, dashed]

    % Nodes for I_1 to I_J (G variables in sketch)
    % \node[var] (I4) at (1, 4) {$Z_4$};
    % \node[var] (I5) at (1, 0) {$Z_5$};
    \node[var] (I1) at (0, 3) {$Z_1^b$};
    \node[var] (I2) at (0, 2) {$Z_2^b$};
    \node[var] (I3) at (0, 1) {$Z_3^b$};

    % Nodes for X_1 to X_R
    \node[var] (X4) at (3, 2.5) {$X_4^b$};
    \node[var] (X5) at (3, 1.5) {$X_5^b$};
    \node[var] (X1) at (1.5, 3) {$X_1^b$};
    \node[var] (X2) at (1.5, 2) {$X_2^b$};
    \node[var] (X3) at (1.5, 1) {$X_3^b$};

    % Node for H
    % \node[var] (H) at (3.2, 3) {$H$};

    % Node for Y
    \node[var, fill=gray!30] (Y) at (4.5, 2) {$Y^b$};

    % % Double-arrow arcs between I variables (G variables in sketch)
    % \draw[doublearrow, bend right=45] (I1) to (I3);
    % \draw[doublearrow, bend left=45] (I2) to (I5);
    % \draw[doublearrow, bend right=45] (I3) to (I4);
    % \draw[doublearrow, bend right=45] (I4) to (I5);
    % \draw[doublearrow, bend right=45] (I3) to (I5);

    % % Arrows from some I variables to some X variables
    \draw[arrow] (I1) -- (X1);
    \draw[arrow] (I2) -- (X2);
    \draw[arrow] (I3) -- (X3);
    % \draw[arrow] (I4) -- (X4);
    % \draw[arrow] (I5) -- (X5);
    % \draw[dashedarrow, bend left] (I4) to (Y);
    % \draw[dashedarrow, bend right] (I5) to (Y);

    % % H points to both X1 and Y
    % \draw[arrow] (H) -- (X1);
    % \draw[arrow] (H) -- (X2);
    % \draw[arrow] (H) -- (XR);
    % \draw[arrow] (H) -- (Y);

    % Some X points to X
    \draw[arrow] (X1) -- (X4);
    \draw[arrow] (X2) -- (X4);
    \draw[arrow] (X2) -- (X5);
    \draw[arrow] (X3) -- (X5);

    % Some X points to Y
    \draw[arrow] (X4) -- (Y);
    \draw[arrow] (X5) -- (Y);

\end{tikzpicture}
}
\caption{An example of two-sample IV scenario that is considered as 
underidentified in the usual sense. Hidden confounders between $X$ 
and $Y$ are omitted for clarity. 
While the two DAGs have the same structure, in sample $a$ (left)
the covariates $X$ are not observed and in sample $b$ (right) the outcome $Y$ 
is not observed, these unobserved variables are represented by gray nodes.}
\label{fig:example_two_sample}
\end{figure}
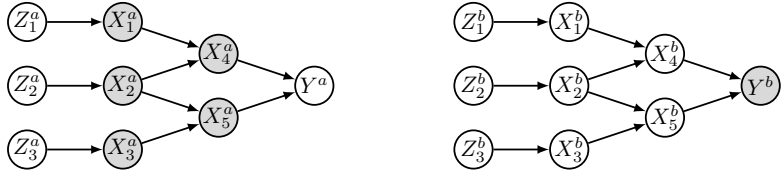

\citet{pfister2022identifiability} study in detail 
the identifiability conditions under the SCM \eqref{eq:individual_scm}. 
In the following, we describe the identifiability conditions under the 
reduced model \eqref{eq:reduced_scm} which is compatible with 
the two-sample summary statistics scenario. 
Lemma~\ref{lem:ts_solution_space} describes the solution space 
of the causal effects with the reduced form model. 
\begin{lemma} \label{lem:ts_solution_space}
If $\mathbb{E}[ZZ^{\top}]$ has full rank, the solution space of the causal 
effects based on the IV moment condition can be written as 
\begin{equation} \label{eq:ss_solution_space}
\cB^{\text{sum}} = \left\{\beta\in\bbR^d: \pi - \Pi\beta = 0\right\}.
\end{equation}
\end{lemma}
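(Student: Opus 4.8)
The plan is to prove the two solution sets agree by directly evaluating the population cross-moments in the one-sample IV moment condition \eqref{eq:id_solution_space} under the reduced form \eqref{eq:reduced_scm}, and showing they collapse to the condition defining $\cB^{\text{sum}}$. Concretely, I want to establish $\cB^{\text{ind}}=\cB^{\text{sum}}$.

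First I would substitute the reduced-form equations $Y=Z^\top\pi+u^Y$ and $X=\Pi^\top Z+u^X$ into $\bbE(ZY)$ and $\bbE(ZX^\top)$. The crucial ingredient is the exogeneity of the reduced-form errors: because $u^X$ and $u^Y$ are measurable functions of $(H,\nu^X,\nu^Y)$, which by the model assumptions in \eqref{eq:individual_scm} is jointly independent of $Z$, and because the underlying noise terms are centered, we obtain $\bbE(Z(u^X)^\top)=0\in\bbR^{m\times d}$ and $\bbE(Zu^Y)=0\in\bbR^m$. Substituting then yields $\bbE(ZY)=\bbE(ZZ^\top)\,\pi$ and $\bbE(ZX^\top)=\bbE(ZZ^\top)\,\Pi$.

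Plugging these into the moment equation $\bbE(ZY)=\bbE(ZX^\top)\beta$ turns the definition of $\cB^{\text{ind}}$ into $\bbE(ZZ^\top)(\pi-\Pi\beta)=0$. Finally, invoking the hypothesis that $\bbE(ZZ^\top)$ has full rank, hence is invertible, I left-multiply by its inverse and cancel it, so the moment condition is equivalent to $\pi-\Pi\beta=0$. This is exactly the defining condition of $\cB^{\text{sum}}$ in \eqref{eq:ss_solution_space}, giving $\cB^{\text{ind}}=\cB^{\text{sum}}$.

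The conceptual core is a single line---cancelling the invertible matrix $\bbE(ZZ^\top)$---so I expect the main (and modest) obstacle to be the bookkeeping that precedes it: carefully tracking the transpose conventions in \eqref{eq:reduced_scm} and verifying the two orthogonality relations $\bbE(Z(u^X)^\top)=0$ and $\bbE(Zu^Y)=0$ from the joint-independence and mean-zero assumptions, since it is precisely these relations that eliminate the error terms and reduce the moment condition to a statement purely about $\pi$ and $\Pi$.
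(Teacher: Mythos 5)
Your proposal is correct and follows essentially the same route as the paper's own proof: substitute the reduced form \eqref{eq:reduced_scm} into the moment condition \eqref{eq:id_solution_space} to get $\bbE(ZY)=\bbE(ZZ^\top)\pi$ and $\bbE(ZX^\top)=\bbE(ZZ^\top)\Pi$, then cancel the full-rank matrix $\bbE(ZZ^\top)$. The only difference is that you spell out the orthogonality relations $\bbE(Zu^Y)=0$ and $\bbE(Z(u^X)^\top)=0$ from joint independence and the mean-zero assumptions, which the paper leaves implicit.
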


We will focus on the case where the instruments do not have direct effects 
on the response, which is implied by the SCMs \eqref{eq:individual_scm} and 
its reduced form \eqref{eq:reduced_scm}.
This is usually referred to as the exclusion restriction criteria of IV. 
In genetics research, such a direct effect is also referred to as 
pleiotropy \citep[see e.g.,][]{hemani2018evaluating}.
We demonstrate empirically with additional simulations 
in Supplementary Material~\ref{app:experiment} that the proposed methods still 
perform well under small violations of this assumption.
An example of the possible scenario represented by directed acyclic graphs (DAGs) 
is given in Figure~\ref{fig:example_two_sample}. As we will see 
shortly, although the number of instrument is less than the number 
of covariates in this case, the causal effect from $X$ to $Y$ may 
still be identified.

Under the two-sample summary statistic setting, the identifiability 
conditions in \citet{pfister2022identifiability} can be written as follows.\footnote{For 
a matrix $A$, $\text{Im}(A)$ denotes the image of $A$.}
\begin{assumption} \label{ass:sparse_identify}
For all $S\subseteq [d]$, let $\Pi^S$ be the submatrix of 
$\Pi$ containing the columns in $S$. We assume the following regarding 
the true parameter $\Pi$ 
\begin{enumerate}[label=(\alph*)]
\item $\text{rank}(\Pi^{\text{PA}(Y)}) = |\text{PA}(Y)|$. 
\item $\forall S\subseteq [d]$, it holds that 
$\text{rank}(\Pi^S) \leq \text{rank}(\Pi^{\text{PA}(Y)})$ and 
$\text{Im}(\Pi^S) \neq \text{Im}(\Pi^{\text{PA}(Y)})$ imply 
that $\forall w\in\bbR^{|S|}$, $\Pi^S w\neq \Pi^{\text{PA}(Y)} (\beta^*)^{\text{PA}(Y)}$.
\item $\forall S\subseteq [d]$ with $|S| = |\text{PA}(Y)|$ and 
$S\neq \text{PA}(Y)$, $\text{Im}(\Pi^S) \neq \text{Im}(\Pi^{\text{PA}(Y)})$.
\end{enumerate}
\end{assumption}

To obtain a sparse solution, it is natural to consider the following 
optimization problem 
\begin{equation} \label{eq:sparse_l0}
    \min_{\beta\in\cB^{\text{sum}}} ||\beta||_0.
\end{equation}
Theorem~\ref{thm:identify_truebeta} shows that under Assumption~\ref{ass:sparse_identify}, 
$\beta^*$ is a unique solution to \eqref{eq:sparse_l0}. 
The proof follows similarly as in \citet[][Theorem 3]{pfister2022identifiability}.

\begin{theorem}[Identifiability of sparse causal effect with reduced form model] \label{thm:identify_truebeta}
If Assumption~\ref{ass:sparse_identify} (a) and (b) hold, 
then $\beta^*$ is a solution to \eqref{eq:sparse_l0}. If in addition 
Assumption~\ref{ass:sparse_identify} (c) holds, then $\beta^*$ is the unique solution.
\end{theorem}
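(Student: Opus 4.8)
The plan is to leverage the identity $\pi = \Pi\beta^* = \Pi^{\text{PA}(Y)}(\beta^*)^{\text{PA}(Y)}$, which is immediate from the reduced form \eqref{eq:reduced_scm} together with the fact that $\beta^*$ vanishes off $\text{PA}(Y)$. This shows at once that $\beta^* \in \cB^{\text{sum}}$ and that $\|\beta^*\|_0 = |\text{PA}(Y)|$, so the role of Assumption~\ref{ass:sparse_identify} is to certify that no competing support can realize $\pi$ while being either smaller or different. The single recurring mechanism is that any feasible $\beta$ with support $S := \operatorname{supp}(\beta)$ satisfies $\Pi^S \beta^S = \pi = \Pi^{\text{PA}(Y)}(\beta^*)^{\text{PA}(Y)}$; hence feasibility exhibits a $w = \beta^S$ with $\Pi^S w = \Pi^{\text{PA}(Y)}(\beta^*)^{\text{PA}(Y)}$, which is exactly what the conclusion of clause (b) forbids under its hypotheses.

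For the first claim I would argue by contradiction. Suppose some feasible $\beta$ has $\|\beta\|_0 < |\text{PA}(Y)|$ and set $S := \operatorname{supp}(\beta)$. Then $\text{rank}(\Pi^S) \leq |S| < |\text{PA}(Y)| = \text{rank}(\Pi^{\text{PA}(Y)})$, the last equality being clause (a). This strict rank drop simultaneously verifies the rank hypothesis $\text{rank}(\Pi^S) \leq \text{rank}(\Pi^{\text{PA}(Y)})$ of (b) and forces $\text{Im}(\Pi^S) \neq \text{Im}(\Pi^{\text{PA}(Y)})$, since subspaces of differing dimension cannot coincide. Hence (b) applies and yields $\Pi^S w \neq \Pi^{\text{PA}(Y)}(\beta^*)^{\text{PA}(Y)}$ for all $w$, contradicting $\Pi^S \beta^S = \pi$. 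Thus $\beta^*$ attains the minimal $\ell_0$ value and is a solution to \eqref{eq:sparse_l0}.

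For uniqueness under the additional clause (c), I would take any minimizer $\beta \in \cB^{\text{sum}}$ with $\|\beta\|_0 = |\text{PA}(Y)|$ and split on its support $S$. If $S = \text{PA}(Y)$, then $\Pi^{\text{PA}(Y)}(\beta - \beta^*)^{\text{PA}(Y)} = 0$, and the full column rank of $\Pi^{\text{PA}(Y)}$ from (a), i.e.\ its injectivity, gives $\beta = \beta^*$. If instead $S \neq \text{PA}(Y)$ with $|S| = |\text{PA}(Y)|$, then (c) supplies the image inequality $\text{Im}(\Pi^S) \neq \text{Im}(\Pi^{\text{PA}(Y)})$, while $\text{rank}(\Pi^S) \leq |S| = |\text{PA}(Y)| = \text{rank}(\Pi^{\text{PA}(Y)})$ supplies the rank hypothesis; invoking (b) once more contradicts $\Pi^S \beta^S = \pi$. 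Either way the only minimizer is $\beta^*$.

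The step demanding the most care is matching each case to the two-part hypothesis of (b). In the existence argument the strict rank inequality hands us both parts for free, but in the uniqueness argument with $|S| = |\text{PA}(Y)|$ the ranks may be equal, so the image-inequality clause is no longer automatic; this is precisely why the separate condition (c) is indispensable. Once the hypotheses of (b) are correctly matched in each case, the remainder is elementary linear algebra, so I do not anticipate further difficulty.
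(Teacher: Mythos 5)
Your proof is correct and follows essentially the same route as the paper's: both establish feasibility via $\pi = \Pi\beta^* = \Pi^{\text{PA}(Y)}(\beta^*)^{\text{PA}(Y)}$, then derive a contradiction by verifying the two hypotheses of Assumption~\ref{ass:sparse_identify}~(b) — using the strict rank drop from (a) when $|S| < |\text{PA}(Y)|$, and the image inequality from (c) when $|S| = |\text{PA}(Y)|$ with $S \neq \text{PA}(Y)$. The only difference is that you explicitly settle the remaining case $S = \text{PA}(Y)$ via injectivity of $\Pi^{\text{PA}(Y)}$ (full column rank from (a)), a step the paper's proof leaves implicit, so your write-up is if anything slightly more complete.
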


\subsection{Anderson-Rubin test for two-sample summary statistics}

The AR test is a well-known weak-instrument 
robust test for the causal effect, and the LIML estimator is known to minimize the AR statistic 
\citep[e.g.,][]{dhrymes2012econometrics}. 
\citet{wang2022weak} consider the two-sample summary statistic version of the 
AR test when there is a single covariate, which can be seen as a 
a generalization of the the modified Q statistic proposed by 
\citet{bowden2019improving} for independent instruments. 
The following result is a generalization of \citet{wang2022weak} 
in the presence of multiple covariates\footnote{A related result is also considered 
by \citet{patel2024robust} where a dispersion parameter is included
and the principal components of the instruments are used. Here we focus on the 
case where the instruments are valid.}, which will be referred to as 
the Q statistic. 

\begin{theorem}[Q statistic] \label{thm:q_stat}
Assume Assumption~\ref{ass:regularity} holds. 
For all $\beta \in \bbR^d$, define the Q statistic as
\begin{equation} \label{eq:gen_qstat}
Q(\beta) \coloneqq (\widehat{\pi}-\widehat{\Pi}\beta)^{\top} (\tfrac{1}{n_a}\widehat\Sigma_\pi + \tfrac{1}{n_b}\widehat\Sigma_\Pi(\beta))^{-1}(\widehat{\pi}-\widehat{\Pi}\beta),
\end{equation}
where $\widehat\Sigma_\Pi(\beta) \coloneqq \xi(\beta)\widehat\Sigma_\Pi\xi^\top(\beta)$ with $\xi(\beta) \coloneqq \beta^\top\otimes I_m$.
Then it holds for all $\beta\in\mathbb{R}^d$ and all $r\in(0,\infty)$ that
\begin{equation*}
    \lim_{\substack{n_a,n_b\rightarrow\infty\\n_a/n_b\rightarrow r}}\sup_{t\in\mathbb{R}}\sup_{\substack{P\in\mathcal{P}:\\\beta\in\cB^{\text{sum}}(P)}}\left|\mathbb{P}_P\left(Q(\beta)\leq t\right)-\kappa_m(t)\right| = 0,
\end{equation*}
where $\kappa_m$ is the CDF of the chi-squared distribution with $m$-degrees of freedom.
\end{theorem}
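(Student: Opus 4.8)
The plan is to center the vector $\widehat\pi-\widehat\Pi\beta$, establish joint asymptotic normality of its two independent sample contributions, and recognize $Q(\beta)$ as a quadratic form in a standardized Gaussian limit, which is $\chi^2_m$; the genuine work is then making this uniform over $\{P\in\cP:\beta\in\cB^{\text{sum}}(P)\}$. First I would exploit the null constraint: by Lemma~\ref{lem:ts_solution_space}, $\beta\in\cB^{\text{sum}}(P)$ is equivalent to $\pi-\Pi\beta=0$ under $P$, so substituting $\pi=\Pi\beta$ gives
\[
\widehat\pi-\widehat\Pi\beta=(\widehat\pi-\pi)-(\widehat\Pi-\Pi)\beta .
\]
Since $\widehat\pi$ comes from sample $a$ and $\widehat\Pi$ from the independent sample $b$, this is a sum of two independent, mean-zero terms. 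Using the reduced-form residual representations $\widehat\pi-\pi=(\bZ_a^\top\bZ_a)^{-1}\bZ_a^\top\mathbf u_a^Y$ and $(\widehat\Pi-\Pi)\beta=(\bZ_b^\top\bZ_b)^{-1}\bZ_b^\top\mathbf u_b^X\beta$, where $\mathbf u_a^Y$ and $\mathbf u_b^X$ collect the reduced-form errors of \eqref{eq:reduced_scm}, a central limit theorem on $\tfrac1{\sqrt{n_a}}\bZ_a^\top\mathbf u_a^Y$ and $\tfrac1{\sqrt{n_b}}\bZ_b^\top\mathbf u_b^X\beta$ together with the law of large numbers on $\tfrac1{n_a}\bZ_a^\top\bZ_a$ and $\tfrac1{n_b}\bZ_b^\top\bZ_b$ yields, writing $\Sigma_Z:=\bbE_P[ZZ^\top]$ and using the homoskedasticity built into Assumption~\ref{ass:regularity}, that $\sqrt{n_a}\,(\widehat\pi-\widehat\Pi\beta)\dto\mathcal N(0,\Psi(\beta))$ with $\Psi(\beta)=\big(\operatorname{Var}_P(u^Y)+r\,\beta^\top\operatorname{Cov}_P(u^X)\beta\big)\Sigma_Z^{-1}$; the factor $r$ enters through $n_a/n_b\to r$ and the independence of the two samples.

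Next I would show that the matrix being inverted in \eqref{eq:gen_qstat} consistently estimates this limit after rescaling. The same homoskedasticity makes $\widehat\Sigma_\pi$ and $\widehat\Sigma_\Pi(\beta)$ the classical (rather than robust) variance estimators, and a law of large numbers gives $\widehat\Sigma_\pi\pto\operatorname{Var}_P(u^Y)\Sigma_Z^{-1}$ and $\widehat\Sigma_\Pi(\beta)\pto\beta^\top\operatorname{Cov}_P(u^X)\beta\,\Sigma_Z^{-1}$, using the identity $\widehat\Pi\beta=\xi(\beta)\operatorname{vec}(\widehat\Pi)$ to pass from the block matrix $\widehat\Sigma_\Pi$ to $\widehat\Sigma_\Pi(\beta)=\xi(\beta)\widehat\Sigma_\Pi\xi(\beta)^\top$. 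Hence $n_a\big(\tfrac1{n_a}\widehat\Sigma_\pi+\tfrac1{n_b}\widehat\Sigma_\Pi(\beta)\big)\pto\Psi(\beta)$, and a uniform lower bound on the eigenvalues of $\Sigma_Z$ and of the error variances (again from Assumption~\ref{ass:regularity}) keeps $\Psi(\beta)$ invertible. Writing $Q(\beta)$ as a quadratic form in $\sqrt{n_a}\,(\widehat\pi-\widehat\Pi\beta)$ with weight $\big(n_a(\tfrac1{n_a}\widehat\Sigma_\pi+\tfrac1{n_b}\widehat\Sigma_\Pi(\beta))\big)^{-1}$ and applying Slutsky's theorem gives the pointwise limit $Q(\beta)\dto\chi^2_m$.

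The main obstacle is the uniformity, which I would handle by contradiction along subsequences. If the supremum did not vanish, there would exist $\delta>0$, distributions $P_N\in\cP$ with $\beta\in\cB^{\text{sum}}(P_N)$, thresholds $t_N$, and sample sizes with $n_a/n_b\to r$ such that $|\bbP_{P_N}(Q(\beta)\le t_N)-\kappa_m(t_N)|\ge\delta$. The two ingredients above must then be reproved in triangular-array form along the varying sequence $P_N$; this is the delicate point, since both the standardizing covariance $\Psi(\beta)$ and its estimator now depend on $N$ and the latter is itself random. The uniform moment bounds in Assumption~\ref{ass:regularity} (a uniform $2+\epsilon$ moment on $Zu^Y$ and on $Z(u^X)^\top\beta$) supply a uniform Lindeberg/Lyapunov condition, so the Lindeberg--Feller theorem still yields $\Psi_{P_N}(\beta)^{-1/2}\sqrt{n_a}\,(\widehat\pi-\widehat\Pi\beta)\dto\mathcal N(0,I_m)$, while the uniform eigenvalue bounds keep the estimated and limiting covariances uniformly invertible, so the Slutsky step goes through and $Q(\beta)\dto\chi^2_m$ along $P_N$. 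Since $\kappa_m$ is continuous, P\'olya's theorem upgrades this to $\sup_t|\bbP_{P_N}(Q(\beta)\le t)-\kappa_m(t)|\to0$, and evaluating at $t_N$ contradicts the lower bound $\delta$. This subsequence device is exactly where the uniform-in-$P$ regularity conditions are indispensable.
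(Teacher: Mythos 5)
Your proposal is correct and its core reduction is the same as the paper's: exploit the null constraint $\pi=\Pi\beta$ to write $\widehat\pi-\widehat\Pi\beta=(\widehat\pi-\pi)-(\widehat\Pi-\Pi)\beta$ as a sum of two independent mean-zero OLS errors, show that the rescaled weighting matrix $n_a\big(\tfrac{1}{n_a}\widehat\Sigma_\pi+\tfrac{1}{n_b}\widehat\Sigma_\Pi(\beta)\big)$ consistently estimates the limiting covariance $\big(\operatorname{Var}_P(u^Y)+r\,\beta^\top\operatorname{Cov}_P(u^X)\beta\big)\mathbb{E}_P[ZZ^\top]^{-1}$, and conclude that $Q(\beta)$ is a quadratic form tending to $\chi^2_m$. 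Where you genuinely diverge is in how uniformity over $\mathcal{P}$ is obtained, which is the real content of the theorem. The paper routes everything through Lemma~\ref{lem:s_beta_asymptotic}: it standardizes the statistic as $S_{n_a,n_b}(\beta)$ with centering $\mu_{n_a,n_b}(\beta)$, establishes \emph{directly uniform} Gaussian convergence of $S-\mu$ by citing ready-made uniform results (uniform OLS asymptotic normality, a uniform law of large numbers, a matrix square-root perturbation bound, and a uniform Slutsky-type lemma), after which the theorem is a two-line continuous-mapping argument since $\mu_{n_a,n_b}=0$ on the null. You instead prove the pointwise limit first and uniformize by contradiction: extract near-maximizing sequences $P_N$, $t_N$, rerun the CLT in triangular-array form via Lindeberg--Feller under the uniform $4+\eta$ moment bounds of Assumption~\ref{ass:regularity}, and apply P\'olya's theorem (valid since $\kappa_m$ is continuous) to contradict the assumed $\delta$-gap. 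Both routes work; your subsequence device is essentially how the paper's cited uniform lemmas are themselves proven, so your argument is more self-contained, but it obliges you to verify triangular-array versions of \emph{every} ingredient—not just the CLT but also the consistency of $\widehat\Sigma_\pi$ and $\widehat\Sigma_\Pi(\beta)$ along $P_N$ (a uniform LLN, which you only gesture at) and the uniform control of the inverse square root—whereas the paper's modular route delegates these verifications to the literature. One small correction: the homoskedastic simplification of the OLS sandwich variance follows from the SCM's joint independence of $Z$ and the reduced-form errors in \eqref{eq:individual_scm}, not from Assumption~\ref{ass:regularity}, which only supplies moment and eigenvalue bounds.
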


The Q statistic is the two-sample counterpart of the one-sample AR statistic, 
we present their connections in Supplementary Material~\ref{app:ar_vs_qstat}. 
Its minimizer can also be viewed as a generalized method of moments 
(GMM) estimator \citep{hansen1982large}, and it is related to the J~statistic in economics literature.
See Remark~\ref{rmk:sig_pi_beta} in the Supplementary Material for 
additional comments on the definition of $\widehat\Sigma_\Pi(\beta)$.

\section{Estimating sparse causal effects with spaceTSIV}

We describe two estimation procedures to the optimization problem \eqref{eq:sparse_l0}. 
The first procedure is the two-sample summary statistics counterpart of \texttt{spaceIV} by 
\citet{pfister2022identifiability}, and the second procedure employs L1-penalization to 
replace subset selection which has the advantage of faster computational speed. 
For both procedures, we will use the following estimator, which is the minimizer of the Q statistic 
constrained on a specific support. 
For all $S \subseteq\{1,\ldots,d\}$, define 
\begin{equation} \label{eq:beta_hat_subset}
\widehat\beta^Q(S) \coloneqq \argmin_{\substack{\beta\in\bbR^d: \text{ supp}(\beta) = S}} Q(\beta).
\end{equation}
In order to provide precise theoretical results, we
further let $\mathcal{P}$ denote a family of distributions for 
$(X, Y, Z)$ generated by \eqref{eq:individual_scm} which is assumed to be
sufficiently regular (see Assumption~\ref{ass:regularity} for details). 
For all $P \in \cP$, we let $\beta^*(P)$ denote the causal 
effect and $\cB^\text{sum}(P)$ be the subset 
$\cB^\text{sum}$ induced by the distribution $P$ (both of which are fully 
identified from the observational distribution $P$).

\subsection{Sparsity by subset selection} \label{sec:subset_selection}

For all $s\in[d]$, let  
\begin{equation*}
\widehat\beta^Q(s) \coloneqq \widehat\beta^Q\Big(\textstyle\argmin_{S\subseteq\{1,\ldots, d\}:|S| = s} Q\left(\widehat\beta^Q(S)\right) \Big).
\end{equation*}
Moreover, following Theorem~\ref{thm:q_stat}, for all $s\in[d]$ and for all $\alpha\in(0,1)$, 
the hypothesis test 
\begin{equation*}
\varphi^{\alpha}_s(\cD^{\text{joint}}_{a, b})
\coloneqq \mathbbm{1}\left(Q(\widehat\beta^Q(s)) > \kappa_m^{-1}(1-\alpha)\right)
\end{equation*}
has uniform asymptotic level for the null hypothesis
$$\cH_0(s)\coloneqq\{P\in\mathcal{P}\mid \exists \beta\in\cB^{\text{sum}}(P):\, \norm{\beta}_0 = s\},$$
that is, for $\alpha\in(0,1)$, it holds that 
$$\lim_{n_a,n_b\to\infty} \sup_{P\in\cH_0(s)}\bbP_{P}(\varphi^{\alpha}_s(\cD^{\text{joint}}_{a, b}) = 1) \leq \alpha.$$
An algorithm defining the \texttt{spaceTSIV} estimator 
using subset selection is given in Algorithm~\ref{alg:spacetsiv_l0}. 
Theorem~\ref{thm:consistency_subset} shows that it is consistent.

\begin{theorem} \label{thm:consistency_subset}
Assume Assumption~\ref{ass:regularity} holds. Let $\cD^{\text{joint}}_{a,b}$ be the joint summary statistics 
based on two independent samples of size $n_a$ and $n_b$ respectively. 
Let $P\in\mathcal{P}$ and $s_{\text{max}} \in \bbN$ such that $s_{\text{max}}\geq \norm{\beta^*(P)}_0$. 
If Assumption~\ref{ass:sparse_identify} (a) and (b) holds, then for all $r\in(0,\infty)$ 
\begin{equation*}
\lim_{\substack{n_a,n_b\rightarrow\infty\\n_a/n_b\rightarrow r}} \mathbb{P}_P\left(\norm{\widehat\beta^{\leq s_\text{max}}}_0 = \norm{\beta^*}_0\right) \geq 1-\alpha;
\end{equation*}
if in addition Assumption~\ref{ass:sparse_identify} (c) also holds, then 
for all $\varepsilon > 0$ and all $r\in(0,\infty)$
\begin{equation*}
\lim_{\substack{n_a,n_b\rightarrow\infty\\n_a/n_b\rightarrow r}} \mathbb{P}_P\left(\norm{\widehat\beta^{\leq s_\text{max}} - \beta^*}_2 < \varepsilon \right) \geq 1-\alpha.
\end{equation*}
\end{theorem}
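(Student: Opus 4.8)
The plan is to track the sparsity level at which the subset-selection procedure of Algorithm~\ref{alg:spacetsiv_l0} terminates. Writing $s^\star \coloneqq \norm{\beta^*(P)}_0 = |\text{PA}(Y)|$, the estimator $\widehat\beta^{\leq s_\text{max}}$ is obtained by scanning $s = 1, 2, \ldots, s_\text{max}$ and returning $\widehat\beta^Q(s)$ at the first $s$ for which the test does not reject, i.e.\ $\varphi^\alpha_s(\cD^{\text{joint}}_{a,b}) = 0$. Consequently $\{\norm{\widehat\beta^{\leq s_\text{max}}}_0 = s^\star\}$ is implied by the event that $\varphi^\alpha_s = 1$ for every $s < s^\star$ (no early stopping) together with $\varphi^\alpha_{s^\star} = 0$ (stopping exactly at $s^\star$). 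I would bound these two failure modes separately and combine them by a union bound over the finitely many values $s \in [s_\text{max}]$.

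For the ``no early stopping'' direction I would show that $Q(\widehat\beta^Q(s)) \pto \infty$ for every $s < s^\star$. Under Assumption~\ref{ass:sparse_identify}~(a) and (b), Theorem~\ref{thm:identify_truebeta} tells us $\beta^*$ is a solution to \eqref{eq:sparse_l0}, so no $\beta$ with $\norm{\beta}_0 < s^\star$ satisfies the population moment condition $\pi - \Pi\beta = 0$; equivalently, for every $S$ with $|S| = s < s^\star$ we have $\pi \notin \text{Im}(\Pi^S)$, with the projection residual of $\pi$ onto $\text{Im}(\Pi^S)$ bounded below by a positive constant. Since $\widehat\pi \pto \pi$, $\widehat\Pi \pto \Pi$, and both $\tfrac{1}{n_a}\widehat\Sigma_\pi$ and $\tfrac{1}{n_b}\widehat\Sigma_\Pi(\beta)$ vanish at rate $n^{-1}$, the weight matrix in \eqref{eq:gen_qstat} scales like $n^{-1}$ and its inverse like $n$; hence $\min_{\text{supp}(\beta)=S} Q(\beta)$ grows at rate $n$ whenever the population residual is bounded away from zero, uniformly over the finitely many size-$s$ subsets. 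This forces $\bbP_P(\varphi^\alpha_s = 1) \to 1$ for each $s < s^\star$. For the ``stopping at $s^\star$'' direction, $\text{PA}(Y)$ is a feasible size-$s^\star$ support and $\beta^*$ a feasible point for $\widehat\beta^Q(\text{PA}(Y))$, so $Q(\widehat\beta^Q(s^\star)) \leq Q(\widehat\beta^Q(\text{PA}(Y))) \leq Q(\beta^*)$; since $\beta^* \in \cB^{\text{sum}}(P)$, Theorem~\ref{thm:q_stat} gives $Q(\beta^*) \dto \chi^2_m$, whence $\limsup \bbP_P(\varphi^\alpha_{s^\star} = 1) \leq 1 - \kappa_m(\kappa_m^{-1}(1-\alpha)) = \alpha$. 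Combining the two bounds yields $\liminf \bbP_P(\norm{\widehat\beta^{\leq s_\text{max}}}_0 = s^\star) \geq 1 - \alpha$, the first claim.

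For the second claim I would add Assumption~\ref{ass:sparse_identify}~(c) and refine the analysis on the event $\{\norm{\widehat\beta^{\leq s_\text{max}}}_0 = s^\star\}$, where the output equals $\widehat\beta^Q(\widehat S)$ with $\widehat S \in \argmin_{|S|=s^\star} Q(\widehat\beta^Q(S))$. First I would argue $\widehat S = \text{PA}(Y)$ with probability tending to one: for any competing $S$ with $|S| = s^\star$ and $S \neq \text{PA}(Y)$, (a) gives $\text{rank}(\Pi^S) \leq |S| = \text{rank}(\Pi^{\text{PA}(Y)})$ and (c) gives $\text{Im}(\Pi^S) \neq \text{Im}(\Pi^{\text{PA}(Y)})$, so by (b) we again obtain $\pi \notin \text{Im}(\Pi^S)$ and hence $Q(\widehat\beta^Q(S)) \pto \infty$, whereas $Q(\widehat\beta^Q(\text{PA}(Y))) = O_P(1)$; the divergence thus singles out $\text{PA}(Y)$ as the minimizing support. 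Then, on $\{\widehat S = \text{PA}(Y)\}$, I would invoke consistency of the constrained minimizer: since $\Pi^{\text{PA}(Y)}$ has full column rank by (a) and $\pi \in \text{Im}(\Pi^{\text{PA}(Y)})$, the population problem has the unique root $(\beta^*)^{\text{PA}(Y)}$, and a standard argmin/GMM continuity argument using $\widehat\pi \pto \pi$ and $\widehat\Pi \pto \Pi$ gives $\widehat\beta^Q(\text{PA}(Y)) \pto \beta^*$. Intersecting the three high-probability events---stopping at $s^\star$, selecting $\widehat S = \text{PA}(Y)$, and $\norm{\widehat\beta^Q(\text{PA}(Y)) - \beta^*}_2 < \varepsilon$---and using the $1-\alpha$ bound from the first claim yields $\liminf \bbP_P(\norm{\widehat\beta^{\leq s_\text{max}} - \beta^*}_2 < \varepsilon) \geq 1 - \alpha$.

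The main obstacle I anticipate is the divergence argument with a parameter-dependent, vanishing weight matrix. One must show $\min_{\text{supp}(\beta)=S} Q(\beta) \to \infty$ uniformly over the wrong supports, which requires (i) rescaling $Q$ by $n$ and controlling the limit of $n\,(\tfrac{1}{n_a}\widehat\Sigma_\pi + \tfrac{1}{n_b}\widehat\Sigma_\Pi(\beta))$ uniformly in $\beta$, and (ii) ruling out that the residual $\widehat\pi - \widehat\Pi^S\beta^S$ can be driven small by sending $\beta^S \to \infty$---this needs the quantitative, $\beta$-uniform lower bound on the distance of $\pi$ from $\text{Im}(\Pi^S)$ rather than merely $\pi \notin \text{Im}(\Pi^S)$. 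The companion consistency of the constrained minimizer on the true support is likewise a GMM-type argument that must accommodate the $\beta$-dependence of the weighting, but this is more routine once the rescaled objective is shown to converge uniformly on compacta to a well-separated population criterion.
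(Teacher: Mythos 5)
Your proposal follows the same architecture as the paper's proof: (i) show $Q(\widehat\beta^Q(s))$ diverges for every $s<\norm{\beta^*}_0$, so $\varphi_s^\alpha$ rejects with probability tending to one; (ii) use Theorem~\ref{thm:q_stat} at $\beta^*$ (via $Q(\widehat\beta^Q(s^\star))\leq Q(\beta^*)$) to get asymptotic acceptance probability at least $1-\alpha$ at the true sparsity; (iii) extend the divergence to wrong supports of size $\norm{\beta^*}_0$ and to points far from $\beta^*$ for the second statement. Steps (ii), the event decomposition, and your projection-distance observation $\inf_{\operatorname{supp}(\beta)=S}\norm{\pi-\Pi\beta}_2 \geq \operatorname{dist}(\pi,\operatorname{Im}(\Pi^S))>0$ are all correct (the latter is in fact a cleaner way to get the uniform residual bound than the paper's compactness argument). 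Your second statement differs mildly from the paper's---you argue support recovery plus a GMM-consistency step on $\text{PA}(Y)$, whereas the paper reuses the divergence machinery on $\{\beta:\norm{\beta}_0=\norm{\beta^*}_0,\ \norm{\beta-\beta^*}_2>c\}$---but this is a variant, not a different method.

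The genuine gap is in the divergence step, and your proposed fix does not close it. You assert that the weight matrix in \eqref{eq:gen_qstat} ``scales like $n^{-1}$ and its inverse like $n$,'' but this is false uniformly in $\beta$: since $\widehat\Sigma_\Pi(\beta)=\xi(\beta)\widehat\Sigma_\Pi\xi(\beta)^{\top}$ is quadratic in $\beta$, the term $\tfrac{1}{n_b}\widehat\Sigma_\Pi(\beta)$ is of order $\norm{\beta}_2^2/n_b$ and is not small once $\norm{\beta}_2\gtrsim\sqrt{n_b}$. Your remedy---the $\beta$-uniform lower bound on $\norm{\pi-\Pi^S\beta^S}_2$---does not help, because the failure mode as $\norm{\beta}_2\to\infty$ is not a shrinking residual but an exploding variance normalization. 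Concretely, if some wrong support $S$ has $\operatorname{rank}(\Pi^S)<|S|$ (which Assumption~\ref{ass:sparse_identify} does not exclude, e.g.\ a covariate $j$ with $\Pi^j=0$), take $v\in\ker(\Pi^S)$ with $\norm{v}_2=1$ and $\beta=cv$ with $c\gg\sqrt{n_b}$: then $\widehat\pi-c\widehat\Pi v$ has norm of order $c/\sqrt{n_b}$ while the smallest eigenvalue of the weight matrix is of order $c^2/n_b$, so $Q(cv)$ remains stochastically bounded even though $\norm{\pi-\Pi\beta}_2$ is bounded below on the whole cone; ``$Q$ grows at rate $n$ whenever the population residual is bounded away from zero'' is therefore not true over the non-compact constraint set. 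The paper's proof resolves exactly this point with a device your proposal lacks: it argues that $S_{n_a,n_b}$, and hence $Q$, is unaffected by the norm of $\beta$, which lets it replace the infimum over $\{\norm{\beta}_0=s\}$ by the infimum over the compact set $\overline{\cB}_s=\{\beta:\norm{\beta}_0=s,\ \norm{\beta}_2=1\}$, on which the weight matrix is uniformly $\mathcal{O}(n^{-1})$, and then controls $\sup_{\norm{\beta}_2=1}\norm{S(\beta)-\mu(\beta)}_2$ via Weyl's inequality and Assumption~\ref{ass:regularity}. Without this normalization step (or an added assumption that every candidate $\Pi^S$ has full column rank, so that $\sigma_{\min}(\widehat\Pi^S)$ stays bounded away from zero and the residual growth matches the weight growth), your divergence claim fails; note the same gap propagates into the support-recovery step of your second claim, which invokes the identical divergence argument for size-$\norm{\beta^*}_0$ supports.
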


\begin{algorithm}
\caption{\texttt{spaceTSIV} with L0 penalization} \label{alg:spacetsiv_l0}
\DontPrintSemicolon
\SetKwInput{KwIn}{Input}
\SetKwInput{KwOut}{Output}
\KwIn{Joint summary statistics $\cD^{\text{joint}}_{a,b}$, 
maximum support size $s_\text{max}$, significance level $\alpha\in(0,1)$} 
Initialize $s \gets 1$ and $\phi \gets 1$ \;
\While{$s \leq s_\text{max}$ and $\phi = 1$}{
Set $\mathbf{S}_s$ to the set of all subsets of $[d]$ of size $s$ \;
\For{$S \in \mathbf{S}_s$}{
Compute $\widehat\beta^Q(S)$ \;
Compute $Q(\widehat\beta^Q(S))$ \;
}
$S_\text{best} \gets \argmin_{S\in\mathbf{S}_s} Q(\widehat\beta^Q(S))$ \; % \tcp{Best set among sets of size $s$}
$\widehat\beta(s) \gets \widehat\beta^Q(S_\text{best})$ \;
$\phi \gets \varphi_{s}^\alpha(\cD^{\text{joint}}_{a,b})$ \;
$s \gets s + 1$ 
}
$\widehat\beta_{\leq s_{\text{max}}} \gets \widehat\beta(s)$ \;
\KwOut{Final estimate $\widehat\beta_{\leq s_{\text{max}}}$ and test result $\phi$}
\end{algorithm}

\subsection{Sparsity by L1 penalty} \label{sec:method_l1}

The subset selection approach introduced in Section~\ref{sec:subset_selection} 
becomes computationally infeasible when the number of covariates is large. 
We therefore propose a faster approach that uses L1 penalization to estimate 
the support of $\beta^*$ and then adapt the testing procedure from the previous 
section. More specifically, for a penalty parameter $\lambda>0$, we first minimize 
the following L1-loss
\begin{equation} \label{eq:tsiv_lasso}
\cL_\lambda^{\text{TSIV-L1}}(\beta) = \frac{1}{2} \norm{\widehat\pi - \widehat\Pi\beta}_2^2 + \lambda\norm{\beta}_1.
\end{equation}
Define $\widehat\beta(\lambda) \coloneqq \argmin_{\beta\in\mathbb{R}^d} \cL_\lambda^{\text{TSIV-L1}}(\beta)$ 
and $\widehat{S}_\lambda \coloneqq \operatorname{supp}(\widehat\beta(\lambda))$. 
We then propose to refit the parameter as in \eqref{eq:beta_hat_subset} using 
the set $\widehat{S}_{\lambda}$ and performing the hypothesis test defined by
\begin{equation*}
    \varphi_\lambda^\alpha(\cD^{\text{joint}}_{n_a, n_b}) \coloneqq  \mathbbm{1}\left(Q(\widehat\beta^Q(\widehat{S}_\lambda)) > \kappa_m^{-1}(1-\alpha)\right).
\end{equation*}
By similar arguments as in Section~\ref{sec:subset_selection} this test for 
$S=\widehat{S}_\lambda$ has uniform asymptotic level for the null hypothesis
\begin{equation*}
    H_0(S)\coloneqq\{P\in\mathcal{P}\mid \exists \beta \in \cB^{\text{sum}}(P): \operatorname{supp}(\beta)=S\}.
\end{equation*}
Under sufficient regularity conditions and assuming that $\beta^*$ is indeed 
sparse, one can hope---based on similar results for high-dimensional linear 
models \citep[e.g.,\ ][]{buhlmann2011statistics}---that for appropriately chosen 
$\lambda$ it holds that $\widehat{S}_{\lambda}$ converges to $\operatorname{supp}(\beta^*)$. 
This motivates the following estimator, 
\texttt{spaceTSIV} with L1 penalization,  
defined in Algorithm~\ref{alg:spacetsiv_l1}.

\begin{algorithm}
\caption{\texttt{spaceTSIV} with L1 penalization} \label{alg:spacetsiv_l1}
\DontPrintSemicolon
\SetKwInput{KwIn}{Input}
\SetKwInput{KwOut}{Output}
\KwIn{Joint summary statistics $\cD^{\text{joint}}_{a,b}$, 
a vector of penalty values in decreasing order $\{\lambda_1,\ldots,\lambda_\ell\}$, 
significance level $\alpha\in(0,1)$} 
Initialize $l \gets 1$ and $\phi \gets 1$\;
\While{$l \leq \ell$ and $\phi = 1$}{
$\lambda \gets \lambda_l$ \;
$\widehat S_\lambda \gets \text{supp}\left(\argmin_{\beta\in\mathbb{R}^d} \cL_{\lambda}^{\text{TSIV-L1}}(\beta)\right)$ \;
Compute $\widehat\beta^Q(\widehat S_\lambda)$\;
$\phi \gets \varphi_{\lambda}^\alpha(\cD^{\text{joint}}_{a,b})$ \;
$l \gets l+1$ \;
}
$\widehat\beta_{\leq \lambda_{\text{max}}} \gets \widehat\beta(\widehat{S}_{\lambda})$ \; 
\KwOut{Final estimate $\widehat\beta_{\leq \lambda_{\text{max}}}$ and test result $\phi$}
\end{algorithm}

Intuitively, if the subset selection is indeed correct (i.e., it recovers 
the support of $\beta^*$) for the first accepted set, then this procedure 
should correctly estimate $\beta^*$. A full theoretical analysis, however, 
goes beyond the scope of this work and we propose this procedure only as 
a heuristic computational speed up.

\subsection{Practical considerations} \label{sec:practicality}

When using the subset selection approach in practice, it can happen that 
there are multiple estimates with different support of the same (smallest) 
size not being rejected by $\varphi_s^\alpha$. This indicates, that at least 
in finite sample, the causal effect $\beta^*$ is not fully identified.
We recommend reporting all subsets of the smallest size that are not 
rejected by $\varphi_s^\alpha$ as possible effects.

Moreover, since the estimator \texttt{spaceTSIV} is based on optimizing 
a test statistics, one immediate approach to construct confidence intervals (CIs) 
is by inverting the test. In the real application, we construct the CIs 
for the non-zero causal effects by inverting $\varphi_s^\alpha$ or $\varphi_\lambda^\alpha$ and projecting onto 
each non-zero coordinate. We choose this approach for its practicality, but 
other approaches exist which may be more suitable \citep[e.g.,][]{londschien2024weak}, 
and one should also take into account the effect of post-selection inference 
\citep[e.g.,][]{lee2016exact}. 
One of the advantages of inverting the test is that is takes into account 
the strength of the instruments (and hence identifiability). So if the 
resulting CIs are unbounded this generally indicates that there is limited 
identifiability.
This is well known property for the AR test \citep[e.g.,][]{dufour1997some,davidson2014confidence}.

%%%%%%%%%%%%%%%%%%%%%%%%%%%%%%%%%%%%%%%%%%%%%%%%%%%%%%%%%%%%%%%%%%%%%%%%%%%%%%%%
\section{Experiments} \label{sec:experiment}
%-------------------------------------------------------------------------------

Code for reproducing the simulations and the real-data application  
along with the data are available in the GitHub repository \url{https://github.com/shimenghuang/spacetsiv}.
All experiments were run on a MacBook Pro laptop with M1 chip. 

\subsection{Simulations} \label{sec:simulation}

We present simulation results for two data generating processes (DGPs) 
summarized below in this section. Further simulation results are provided 
in Supplementary Material~\ref{app:experiment}.
The first, DGP1, is a low dimensional example taken from 
\citet[][Figure~3]{pfister2022identifiability}.
We compare the subset selection and the L1-penalization versions of 
\texttt{spaceTSIV}, denoted as \texttt{spaceTSIV-L0} and \texttt{spaceTSIV-L1}
respectively, as well as the \texttt{TSIV} estimator (defined as the minimizer of 
\eqref{eq:tsiv_lasso} with $\lambda = 0$, in which case the generalized inverse is used). 
The second, DGP2, illustrates the scenario with higher dimensional covariates, 
sparser causal effects, and correlated instruments. 
In this setting we omit \texttt{spaceTSIV-L0}
from the comparison due its high computational cost. 
Overview of the simulation setup is given below
and more details can be found in Supplementary Material~\ref{app:experiment_detail}.

\noindent\textbf{DGP1 overview:} $m = 3$ and $d = 5$ and $\norm{\beta^*}_0 = 2$. For increasing 
$n =n_a = n_b$, we generate iid $\{(Y_i, Z_i)\}_{i=1}^{n_a}$ and 
$\{(X_i, Z_i)\}_{i=1}^{n_b}$ according to a linear SCM with Gaussian errors
and then compute the summary statistics using seemingly unrelated regression.  \\ 
\noindent\textbf{DGP2 overview:} $m = 5$, $d= 100$, and $\norm{\beta^*}_0 = 2$. 
With fixed values of $\pi$, $\Pi$, 
$\Sigma_\pi$, and $\Sigma_\Pi$,
and increasing $n = n_a = n_b$, we generate 
$\widehat\pi_{n_a} \sim \cN(\pi, \frac{1}{n_a}\Sigma_\pi)$ and
$\widehat\Pi_{n_b} \sim \cN(\Pi, \frac{1}{n_b}\Sigma_\Pi)$, and set 
$\widehat\Sigma_{\pi,n_a} = \Sigma_{\pi}$
and $\widehat\Sigma_{\Pi,n_b} = \Sigma_{\Pi}$. 

\begin{figure}[!thb]
\centering
\includegraphics[width=\linewidth]{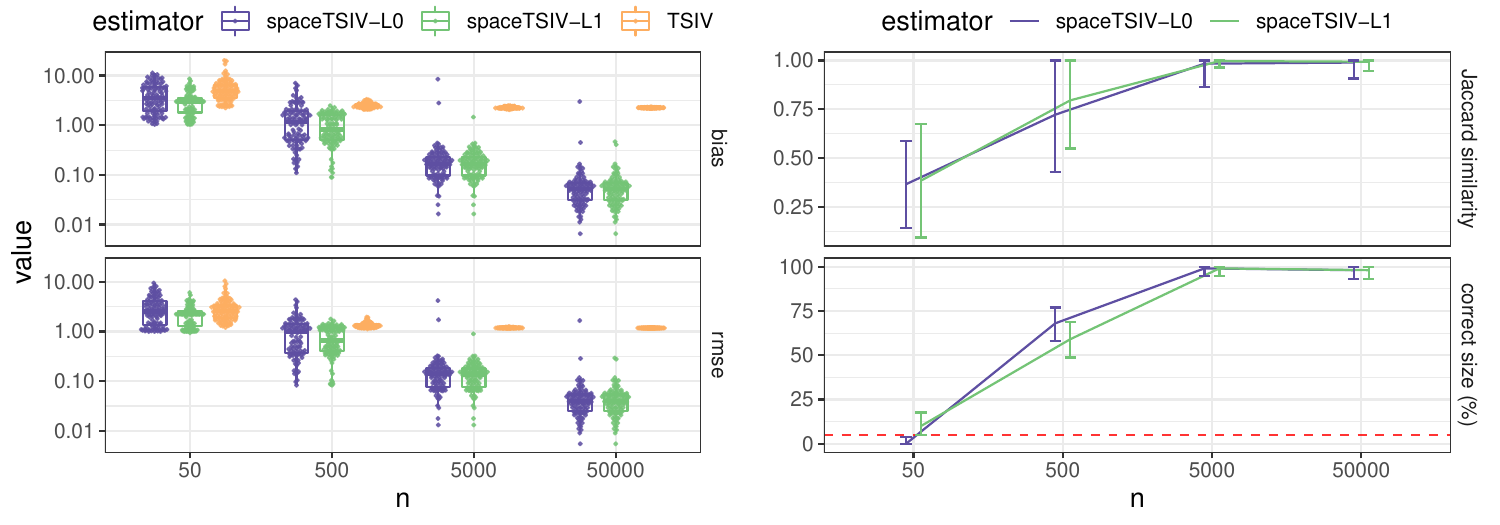}
\hspace{5pt}
\includegraphics[width=\linewidth]{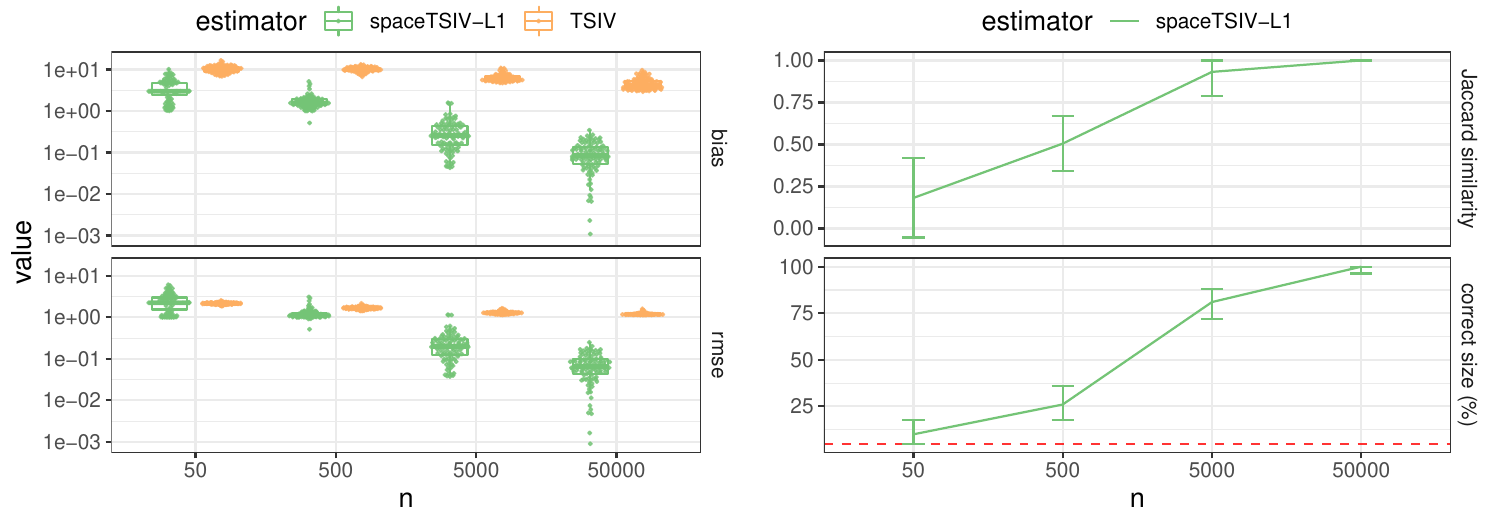}
\caption{Results using data generated by DGP1 (top) and DGP2 (bottom) 
based on $100$ repetitions. 
Left: Bias and rmse of the estimators. The y-axis is on log scale for clarity. 
Right: Average Jaccard similarity between the selected covariates and 
the true causal covariates (error bars indicate confidence intervals 
constructed by mean plus/minus one standard error), 
and percentage of estimates that have the correct support size 
(error bars indicate $95\%$ binomial confidence intervals).}
\label{fig:dgp_eval}
\end{figure}

We evaluate \texttt{spaceTSIV} based on both its variable selection and 
estimation performances. 
The results are shown in Figure~\ref{fig:dgp_eval}. 
We can see that the bias and rmse of \texttt{spaceTSIV} shrinks with 
increasing sample size with either L0 or L1 penalization, which is not the 
case for the non-sparse estimator \texttt{TSIV}. 
In terms of variable selection, we see that for both DGPs as the sample 
sizes increase, the Jaccard similarity\footnote{
For two sets $A$ and $B$, the Jaccard similarity is defined as 
$\text{Jaccard}(A, B) \coloneqq \displaystyle\frac{|A \cap B|}{|A\cup B|}$.}
increases to around $1$, and the percentage of estimates having 
the correct support size also increases to around $100\%$, 
empirically confirming
the consistency results in Theorem~\ref{thm:consistency_subset}. 
The performance of \texttt{spaceTSIV-L0} and \texttt{spaceTSIV-L1} are 
similar in terms of both estimation and variable selection for DGP1. 

\subsection{Application} \label{sec:application}

We apply our methods to summary statistics of SNP-level associations where 
the covariates and the outcome come from two separate GWAS sources. The 
covariates' summary statistics come from the GTEX consortium, which measure 
levels of expression of protein coding genes 
across multiple tissue types in the human body. Gene expression is a convenient 
and reliable upstream marker of protein production, which would be the 
natural target of a future drug. We specifically focus on expression of the GLP1R gene in $10$ 
tissue types that are relevant to the treatment of cardio-metabolic disease.
These are brain caudate, hypothalamus, atrial appendage, left ventricle, lung, 
nerve, pancreas, stomach, testis, and thyroid. The SNP-outcome summary statistics 
measure the genetic association with  coronary artery disease (CAD) risk, 
and are obtained from the CARDIoGRAMplusC4D consortium. These data were first 
analysed in in \citet{patel2024robust}, who proposed a novel principle 
component analysis (PCA) method for constructing orthogonal composite 
instruments from $851$ SNPs in the GLP1R gene region. For this analysis, 
they use $23$ principle components (PCs) as IVs for the $10$ covariates. 
The analysis by \citet{patel2024robust} suggests that GLP1R expression only 
has a significant effect on CAD risk in $2$ of the $10$ tissues, although this was based on 
$95\%$ confidence intervals using a normal approximation which, 
unlike the test-inversion method we use, does not always reliably capture the 
true uncertainty of IV estimates when the instruments are weak.

Based on the analysis of \citet{patel2024robust}, it is reasonable to 
believe that the causal effects are sparse in this application. Rather than opting 
for PCA pre-processing of the genetic summary statistics, we consider the selection 
of individual SNPs instruments based on the more conventional approach using the 
first-stage F-statistics\footnote{
Given a marginal OLS coefficient $\widehat\gamma \in\bbR$ and its 
corresponding standard error $\widehat\sigma \in\bbR$, the first-stage 
F-statistic is defined as $\widehat\gamma^2 / \widehat\sigma^2$.} of the 
gene expression summary statistics. 
We keep the top two genetic variants with the largest first-stage F-statistics 
for each of the $10$ covariates. Since some SNPs are most strongly associated 
with multiple covariates, we eventually keep $17$ of the $851$ genetic variants 
in the original data. 
Moreover, since the summary statistic data contains only the marginal 
associations along with their standard errors, we use the adjustment method 
in Proposition~\ref{prop:marginal_to_joint} to obtain the estimated joint 
effects and variance-covariance matrices. 

\begin{figure}[!tbh]
\centering
\includegraphics[width=0.9\linewidth]{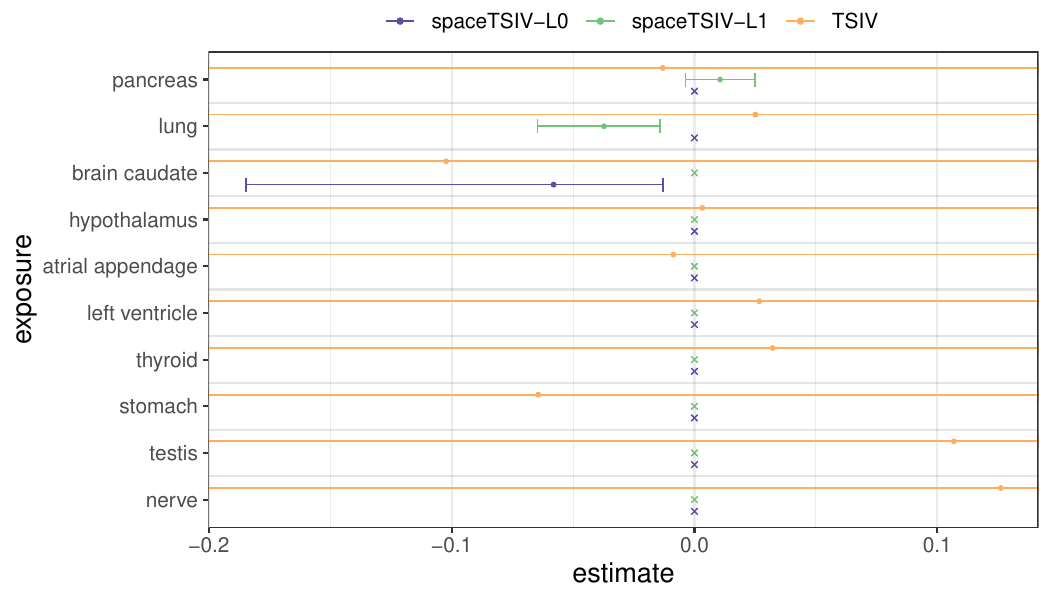}
\caption{Estimated effects of the GLP1R expression in $10$ tissues using the 
selected $17$ genetic variants as instruments. 
Error bars represent $90\%$ confidence intervals (CIs) constructed by inverting $\varphi_s^\alpha$ and $\varphi_\lambda^\alpha$ respectively, and projecting onto each coordinate. 
}
\label{fig:glp1r_main}
\end{figure}

The analysis results based on \texttt{spaceTSIV} with L0 and L1 penalization 
and regular \texttt{TSIV} are reported in Figure~\ref{fig:glp1r_main}, where the 90\% CIs
are obtained from inverting $\varphi_s^\alpha$ and $\varphi_\lambda^\alpha$ as described in Section~\ref{sec:practicality}. 
They show that the CIs for \texttt{TSIV} are all of infinite length. 
This demonstrates that, even though there are more instruments than covariates, the causal 
effects are still under-identified due to weak instruments. 
Moreover, the \texttt{spaceTSIV} with L0 penalization yields a single 
set of size $1$ while with L1 penalization we obtain a set of size $2$. 
The significant negative effect of brain caudate aligns with the analysis 
result in \citet{patel2024robust} and is biologically meaningful. 
The different result from \texttt{spaceTSIV-L1} could be due to the high correlation 
of the SNPs, which may result in the L1 relaxation of the L0 minimization 
problem not achieving the same estimate. 
In general we recommend using the L0 procedure whenever computationally feasible as it comes with clear theoretical guarantees.

%%%%%%%%%%%%%%%%%%%%%%%%%%%%%%%%%%%%%%%%%%%%%%%%%%%%%%%%%%%%%%%%%%%%%%%%%%%%%%%%
\section{Discussion} \label{sec:disuss}
%-------------------------------------------------------------------------------

We propose \texttt{spaceTSIV} for sparse multivariable causal effect estimation 
under unobserved confounding, which is applicable to the two-sample summary 
statistics setting. Two methods using subset selection and L1-penalization respectively 
are provided. We prove consistency for the subset selection approach and illustrate 
the results in simulations. We also show in simulations that the L1-penalization 
approach, which is much more computational efficient, can achieve similar 
performance as the subset selection approach in terms of bias and consistency. 
To focus on the main idea of this work, we have assumed that the summary 
statistics utilized in the analysis are obtained from two independent and 
homogeneous samples, which is commonly assumed in genetic epidemiology. 
However, it would be interesting to generalize the methods 
to heterogeneous samples similar to results by \citet{zhao2019two} in the non-sparse setting.
Moreover, if the summary statistics are obtained from two samples with 
overlapping observations, additional correlations should be taken into account. 

\section*{Acknowledgement}

The authors would like to thank Stephen Burgess and Ashish Patel for helpful
discussions at the start of this research project, and Anton Rask Lundborg for
helpful discussions on the uniform asymptotic results. This work was partially
completed during SH's research visit at Novo Nordisk. The authors would like to
thank Jesper Ferkinghoff-Borg, Kang Li and Lewis Marsh for facilitating this
visit and for discussing necessary concepts and tools in statistical genetics at
an early stage. SH and NP are supported by a research grant (0069071) from Novo
Nordisk Fonden. JB is funded at the University of Exeter by research grant
MR/X011372/1.

\clearpage

\vskip 0.2in

\bibliographystyle{abbrvnat}
\bibliography{bibliography} 

\clearpage

\appendix
\renewcommand{\thesection}{\Alph{section}}
\counterwithin{figure}{section}
\renewcommand\thefigure{\thesection\arabic{figure}}
\counterwithin{table}{section}
\renewcommand\thetable{\thesection\arabic{table}}

\raggedbottom
\allowdisplaybreaks

\section*{\centering Supplementary Material}

%%%%%%%%%%%%%%%%%%%%%%%%%%%%%%%%%%%%%%%%%%%%%%%%%%%%%%%%%%%%%%%%%%%%%%%%%%%%%%%%
\section{Details of test statistics and test-based estimators}
%-------------------------------------------------------------------------------

\subsection{Connection between Anderson-Rubin statistic and Q statistic} \label{app:ar_vs_qstat}

Suppose we observe one set of iid samples $\{(X_i, Y_i, Z_i)\}_{i=1}^{n}$. 
The AR statistic is given by 
\begin{equation} \label{eq:ar_individual}
AR(\beta) \coloneqq \frac{n-m}{m}\cdot\frac{(\bY-\bX\beta)^\top P_Z(\bY-\bX\beta)}{(\bY-\bX\beta)^\top M_Z(\bY-\bX\beta)},
\end{equation}
where $P_Z = \bZ(\bZ^\top \bZ)^{-1}\bZ^\top$ and $M_Z = I_d - P_Z$. 
When the true causal effect $\beta^*$ is identified, $m\cdot AR(\beta^*)\dto\chi^2_m$ \citepAPP{anderson1949estimation, londschien2024weak}. 

We can rewrite the AR statistic in terms of (joint) OLS estimates and their 
respective estimated variance-covariance matrices. Specifically, let 
$\widehat\pi = (\bZ^\top \bZ)^{-1}\bZ\bY$ and 
$\widehat\Pi = (\bZ^\top \bZ)^{-1}\bZ\bX$,  
we have that 
\begin{equation*}
(\bY-\bX\beta)^\top P_Z(\bY-\bX\beta) = (\widehat\pi - \widehat\Pi\beta)^\top(\bZ^\top\bZ)(\widehat\pi - \widehat\Pi\beta).
\end{equation*}
Moreover, for all
$\beta\in\mathbb{R}^d$ define
\begin{align*}
\widehat\Sigma_\pi &= (\bY - \bZ\widehat\pi)^\top(\bY - \bZ\widehat\pi)(\bZ^\top\bZ)^{-1}, \\
\widehat\Sigma_\Pi(\beta) &= \beta^{\top}(\bX - \bZ\widehat\Pi)^\top(\bX - \bZ\widehat\Pi)\beta(\bZ^\top\bZ)^{-1}, \text{ and } \\
\widehat\Sigma_{\pi,\Pi}(\beta) &= (\bY - \bZ\widehat\pi)^\top(\bX - \bZ\widehat\Pi)\beta(\bZ^\top\bZ)^{-1}. 
\end{align*}
Then, we can expand the denominator in \eqref{eq:ar_individual} as follows
\begin{align*}
    &(\bY-\bX\beta)^{\top}M_Z(\bY-\bX\beta)\\
    &\quad=(M_Z\bY-M_Z\bX\beta)^{\top}(M_Z\bY-M_Z\bX\beta)\\
    &\quad=(\bY-\bZ\widehat\pi-(\bX-\bX\widehat\Pi\beta))^{\top}(\bY-\bZ\widehat\pi-(\bX-\bX\widehat\Pi)\beta)\\
    &\quad= (\bY-\bZ\widehat\pi)^{\top}(\bY-\bZ\widehat\pi)+\beta^{\top}(\bX-\bX\widehat\Pi)^{\top}(\bX-\bX\widehat\Pi)\beta -2(\bY-\bZ\widehat\pi)^{\top}(\bX-\bX\widehat\Pi)\beta,
\end{align*}
which implies
\begin{equation*}
    (\bY-\bX\beta)^\top M_Z(\bY-\bX\beta)(\bZ^\top\bZ)^{-1} = 
\widehat\Sigma_\pi + \widehat\Sigma_\Pi(\beta)  - 2 \widehat\Sigma_{\pi,\Pi}(\beta).
\end{equation*}
Therefore, 
\begin{align*}
AR(\beta) &= \frac{(\widehat\pi - \widehat\Pi\beta)^\top(\bZ^\top\bZ)(\widehat\pi - \widehat\Pi\beta)}{(\bY-\bX\beta)^\top M_Z(\bY-\bX\beta)} \\
&= \frac{1}{m}(\widehat\pi - \widehat\Pi\beta)^\top
\left(\frac{1}{n-m}\widehat\Sigma_\pi + \frac{1}{n-m}\widehat\Sigma_\Pi(\beta)  - \frac{2}{n-m}\widehat\Sigma_{\pi,\Pi}(\beta)\right)^{-1}
(\widehat\pi - \widehat\Pi\beta).
\end{align*}
From this expression, we can see the connection
between the AR statistic and the Q statistic. Specifically, for a fixed $m$ and large $n$, the difference between 
$mAR(\beta)$ and $Q(\beta)$ is the term $\widehat\Sigma_{\pi,\Pi}$, which is related to the covariance between the residuals of a $Y$ on $Z$ and a $X$ on $Z$ regression. In the two sample setting this covariance is zero because the two regressions are performed on independent samples and therefore it is not needed in the the Q statistic.

\subsection{Coordinate descent for minimizing the TSIV-L1 loss}

We describe the coordinate descent procedure for minimizing \eqref{eq:tsiv_lasso}. 
Let $$\cL^{\text{TSIV}}(\beta) = \frac{1}{2}\|\widehat\pi - \widehat\Pi\beta\|_2^2.$$
For a matrix $A$, denote $A^{-j}$ as the matrix removing A's $j$-th column. 
The derivative of $\cL^{\text{TSIV}}(\beta)$ w.r.t $\beta_j$ is 
\begin{align} \label{eq:liv}
\begin{split}
\frac{\partial \cL^{\text{TSIV}}(\beta)}{\beta_j} &= -(\widehat\Pi^j)^\top(\widehat\pi - \widehat\Pi\beta) \\
&= -(\widehat\Pi^j)^\top\widehat\pi + (\widehat\Pi^j)^\top\widehat\Pi^{-j}\beta_{-j} + (\widehat\Pi^j)^\top\widehat\Pi^j\beta_j  \\
&= -\rho_j + \eta_j\beta_j
\end{split}
\end{align}
where $\rho_j\coloneqq (\widehat\Pi^j)^\top\widehat\pi + (\widehat\Pi^j)^\top\widehat\Pi^{-j}\beta_{-j}$ and 
$\eta_j \coloneqq (\widehat\Pi^j)^\top\widehat\Pi^j$. 
The subgradient of $\lambda||\beta||_1$ w.r.t $\beta_j$ is 
\begin{equation} \label{eq:liv_l1}
\frac{\partial \lambda||\beta||_1}{\beta_j} = \frac{\partial \lambda|\beta_j|}{\beta_j}
\begin{cases}
\{-\lambda\} & \beta_j < 0 \\
[-\lambda, \lambda] & \beta_j = 0 \\
\{-\lambda\} & \beta_j > 0
\end{cases}
\end{equation}
Combining \eqref{eq:liv} and \eqref{eq:liv_l1}, we have that the subgradient of 
$\cL_\lambda^{\text{TSIV-L1}}(\beta)$ w.r.t.\ $\beta_j$ is 
\begin{equation} \label{eq:liv_subgrad}
\frac{\partial \cL_\lambda^{\text{TSIV}}(\beta)}{\partial \beta_j} = 
\begin{cases}
-\rho^j + \eta^j\beta_j - \lambda & \beta_j < 0 \\
[-\rho^j -\lambda, -\rho^j +\lambda] & \beta_j = 0 \\
-\rho^j + \eta^j\beta_j + \lambda & \beta_j > 0.
\end{cases}
\end{equation}
Starting from an initial value of $\widehat\beta$, we loop through 
$j\in[J]$ and update the value of $\widehat\beta_j$ by solving the equation resulting from setting 
\eqref{eq:liv_subgrad} to $0$, which gives
\begin{equation*}
\widehat\beta_j = 
\begin{cases}
\frac{\rho^j + \lambda}{\eta^j} & \rho^j < -\lambda \\
0 & -\lambda < \rho^j < \lambda \\
\frac{\rho^j - \lambda}{\eta^j} & \rho^j > \lambda.
\end{cases}
\end{equation*}

\section{Regularity conditions}

\begin{assumption}[Regularity conditions]
\label{ass:regularity}
Let $\mathcal{P}$ be a family of distributions for $(X, Y, Z)\in\mathbb{R}^d\times\mathbb{R}\times\mathbb{R}^m$ generated by \eqref{eq:individual_scm} and additionally satisfies that there exists $C_1,C_2,c,\eta>0$ such that
\begin{itemize}
    \item $\sup_{P\in\mathcal{P}}\left(\mathbb{E}_P[\|X\|^{4+\eta}]+\mathbb{E}[\|Y\|^{4+\eta}]+\mathbb{E}_P[\|Z\|^{4+\eta}]\right)\leq c$
    \item $\inf_{P\in\mathcal{P}}\min\left(\lambda_{\min}(\mathbb{E}_P[ZZ^{\top}]),\lambda_{\min}(\mathbb{E}_P[XX^{\top}])\right)\geq C_1.$
    \item $\sup_{P\in\mathcal{P}}\max\left(\lambda_{\max}(\mathbb{E}_P[ZZ^{\top}]),\lambda_{\max}(\mathbb{E}_P[XX^{\top}])\right)\leq C_2.$
\end{itemize}
\end{assumption}

\section{Proofs} \label{app:proofs}

\subsection{Proof of Lemma~\ref{lem:ts_solution_space}}
\begin{proof}
The following equivalences hold
\begin{align*}
    \beta \in \cB^{\text{ind}} &\Longleftrightarrow \bbE(Z\, Y) = \bbE(Z X^\top)\beta \\
    &\Longleftrightarrow \bbE(ZZ^\top)\pi = \bbE(ZZ^\top)\Pi\beta \quad \text{since following \eqref{eq:reduced_scm},
    we have $\bbE(Z\, Y) = \bbE(ZZ^\top)\pi$} \\
    &\qquad \qquad \qquad \qquad \qquad \qquad \qquad \qquad \text{and $\bbE(Z X^\top)\beta = \bbE(ZZ^\top)\Pi\beta$} \\ 
    &\Longleftrightarrow \pi = \Pi\beta \qquad \text{since $\bbE[ZZ^\top]$ is full rank}.
\end{align*}
\end{proof}

\subsection{Proof of Proposition~\ref{prop:marginal_to_joint}}
\begin{proof}
We only prove the result for $\widehat\Pi$ and $\widehat\Sigma_\Pi$. $\widehat\pi$ and $\widehat\Sigma_\pi$
can be viewed as a special case of the former, with $d = 1$. 

We first express $D^{(k)}_b$ in terms of the design matrices. To this end, observe that for all $j \in [m]$ and all $k\in [d]$, we have from the marginal OLS summary statistics that 
\begin{equation*}
\begin{split}
\left(\widehat\sigma_{\eta,j}^k\right)^2 &= \frac{(\bX_b^k - \widehat{H}_j^k \bZ_b^j)^\top(\bX_b^k - \widehat{H}_j^k \bZ_b^j)}{(\bZ_b^j)^\top \bZ_b^j} \\
&= \frac{(\bX_b^k)^\top\bX_b^k - 2 \widehat{H}_j^k (\bX_b^k)^\top \bZ_b^j + (\widehat{H}_j^k)^2 (\bZ_b^j)^\top \bZ_b^j}{(\bZ_b^j)^\top \bZ_b^j} \\
&= \frac{(\bX_b^k)^\top\bX_b^k - 2 (\widehat{H}_j^k)^2(\bZ_b^j)^\top\bZ_b^j + (\widehat{H}_j^k)^2 (\bZ_b^j)^\top \bZ_b^j}{(\bZ_b^j)^\top \bZ_b^j} \\
&= \frac{(\bX_b^k)^\top\bX_b^k - (\widehat{H}_j^k)^2 (\bZ_b^j)^\top \bZ_b^j}{(\bZ_b^j)^\top \bZ_b^j} \\
&= \frac{(\bX_b^k)^\top\bX_b^k}{ (\bZ_b^j)^\top \bZ_b^j} - (\widehat{H}_j^k)^2.
\end{split}
\end{equation*}
This further implies that $(\bZ_b^j)^\top \bZ_b^j = \displaystyle\frac{(\bX_b^k)^\top\bX_b^k}{\left(\widehat\sigma_{\eta,j}^k\right)^2 + (\widehat{H}_j^k)^2}$, 
and hence 
$\displaystyle\left(\left(\widehat\sigma_{\eta,j}^k\right)^2 + (\widehat{H}_j^k)^2\right)^{-1}\widehat{H}_j^k
= \frac{(\bZ_b^j)^\top \bZ_b^j}{(\bX_b^k)^\top\bX_b^k}\widehat{H}_j^k 
= \frac{(\bX_b^k)^\top\bZ_b^j}{(\bX_b^k)^\top\bX_b^k}$. 
Therefore, if we define the diagonal matrix $D_{Z_b}$ for all $i\in[m]$ by $(D_{Z_b})_i^i\coloneqq(\bZ_b^i)^\top \bZ_b^i$, it holds that
\begin{equation*}
    D_b^{(k)} = D_{Z_b}^{-1/2}\left((\bX_b^k)^\top\bX_b^k\right)^{1/2}.
\end{equation*}
Using this result, for all $k\in[d]$, we can expand the joint OLS estimate $\widehat{\Pi}^k$ as follows
\begin{align*}
    \widehat\Pi^k & = \left(\bZ_b^\top \bZ_b\right)^{-1}\bZ_b^\top\bX_b^k \\
    &= \left(\bZ_b^\top \bZ_b\right)^{-1} D_{Z_b}\widehat{H}^k \\
    &= D_{Z_b}^{-1/2} \widehat M_{Z_b}^{-1} D_{Z_b}^{-1/2} D_{Z_b} \widehat{H}^k \\
    &= D_{Z_b}^{-1/2} \widehat M_{Z_b}^{-1} D_{Z_b}^{1/2} \widehat{H}^k \\
    &= D_b^{(k)}\left(\widehat M_{Z_b}D_b^{(k)}\right)^{-1} \widehat{H}^k
\end{align*}
Similarly, for all $k,l\in[d]$, the variance-covariance matrix between $\widehat\Pi^k$ and $\widehat\Pi^l$, can be expanded as follows.
\begin{align*}
&\widehat\Sigma_\Pi^{[kl]}\\
&= \left(\bX_b^k - \bZ_b\widehat\Pi^k\right)^\top\left(\bX_b^l - \bZ_b\widehat\Pi^l\right) (\bZ_b^\top \bZ_b)^{-1} \\
&=  \left((\bX_b^k)^\top\bX_b^l - (\bX_b^k)^\top\bZ_b\widehat\Pi^l - (\widehat\Pi^k)^{\top}(\bZ_b)^{\top}\bX_b^l + (\widehat\Pi^k)^\top\bZ_b^\top \bZ_b\widehat\Pi^l\right)(\bZ_b^\top \bZ_b)^{-1} \\
&=  \left((\bX_b^k)^\top\bX_b^l - 2(\widehat\Pi^k)^\top\bZ_b^\top \bZ_b\widehat\Pi^l + (\widehat\Pi^k)^\top\bZ_b^\top \bZ_b\widehat\Pi^l\right)(\bZ_b^\top \bZ_b)^{-1} \\ 
&=  \left((\bX_b^k)^\top\bX_b^l - (\widehat\Pi^k)^\top\bZ_b^\top \bZ_b\widehat\Pi^l\right)(\bZ_b^\top \bZ_b)^{-1} \\
&=  \left(\widehat{M}_{X,k}^l - 
\frac{\bZ_b^\top \bX_b^k}{(\bX_k)^\top \bX_b^k} 
\left(\frac{\bZ_b^\top\bZ_b}{\sqrt{(\bX_k)^\top \bX_b^k} \sqrt{(\bX_b^l)^\top \bX_b^l}}\right)^{-1}
\frac{\bZ_b^\top \bX_b^k}{(\bX_b^l)^\top \bX_b^l}
\right) \left(\frac{\bZ_b^\top\bZ_b}{\sqrt{(\bX_b^k)^\top \bX_b^k} \sqrt{(\bX_b^l)^\top \bX_b^l}}\right)^{-1} \\
&= \left(
\widehat{M}_{X,k}^l - (\widehat{H}^k)^\top D_b^{(k)} \widehat{M}_{Z_b}^{-1} D_b^{(l)}\widehat{H}^l 
\right) D_b^{(k)}\widehat{M}_{Z_b}^{-1}D_b^{(l)}.
\end{align*}
\end{proof}

\subsection{Proof of Theorem~\ref{thm:identify_truebeta}}
\begin{proof}
\textit{(First statement)} Assume Assumption~\ref{ass:sparse_identify} (a) and (b) hold. 
We would like to show that 
$$\beta^* \in \argmin_{\beta\in\cB^{\text{sum}}} \norm{\beta}_0.$$
Since $\beta^* \in\cB^{\text{sum}}$, it suffices to show that for all $\tilde{\beta}\in\cB^{\text{sum}}$, 
we have $\norm{\tilde{\beta}}_0 \geq |\text{PA}(Y)|$. Fix a $\tilde{\beta}\in\cB^{\text{sum}}$. 
Since $\tilde{\beta} \in\cB^{\text{sum}}$, it holds that 
$\pi = \Pi\tilde{\beta} = \Pi\beta^*$. 
Let $S \coloneqq \text{supp}(\tilde{\beta})$. 
Since $\forall j\in [d]\setminus \text{PA}(Y)$, $(\beta^*)^j = 0$, 
$\Pi\tilde{\beta} = \Pi\beta^*$ implies that 
\begin{equation} \label{eq:gamma_subset_eq}
    \Pi^S\tilde{\beta}^S = \Pi^{\text{PA}(Y)}(\beta^*)^{\text{PA}(Y)}.
\end{equation}
For the sake of contradiction, suppose that $|S| < |\text{PA}(Y)|$. 
Then by Assumption~\ref{ass:sparse_identify} (a), we have that 
\begin{equation*}
\text{rank}(\Pi^{\text{PA}(Y)}) = \text{dim}\left(\text{Im}(\Pi^{\text{PA}(Y)}\right) 
= |\text{PA}(Y)| > |S| \geq \text{dim}\left(\text{Im}(\Pi^S)\right) = \text{rank}(\Pi^S). 
\end{equation*}
This gives $\text{rank}(\Pi^{\text{PA}(Y)}) > \text{rank}(\Pi^S)$ which 
implies $\text{Im}(\Pi^{\text{PA}(Y)}) \neq \text{Im}(\Pi^S)$.
Then by Assumption~\ref{ass:sparse_identify} (b), we have that $\forall w\in\bbR^{|S|}$, 
$\Pi^S w \neq \Pi^{\text{PA}(Y)}(\beta^{*})^{\text{PA}(Y)}$, but this contradicts \eqref{eq:gamma_subset_eq}. 
This concludes the proof of the first statement. 

\textit{(Second statement)} It remains to show that there is no other solutions 
than $\beta^*$ when Assumption~\ref{ass:sparse_identify} (c) holds. 
Suppose for the sake of contradiction that there exists $\tilde{\beta}\in\cB^{\text{sum}}$ with 
$S\coloneqq \text{supp}(\tilde{\beta})= |\text{PA}(Y)|$ and $S \neq \text{PA}(Y)$. 
Similarly as above, since $\tilde{\beta} \in\cB^{\text{sum}}$, \eqref{eq:gamma_subset_eq} holds. 
Then by Assumption~\ref{ass:sparse_identify} (c) we have 
$\text{Im}(\Pi^S) \neq \text{Im}(\text{PA}(Y))$.
Moreover, by Assumption~\ref{ass:sparse_identify} (a) it holds that 
$$\text{rank}(\Pi^{\text{PA}(Y)}) = |\text{PA}(Y)| = |S| \geq \text{rank}(\Pi^S).$$
Therefore, by Assumption~\ref{ass:sparse_identify} (b) 
$\forall w\in\bbR^{|S|}$, $\Pi^S w \neq \Pi^{\text{PA}(Y)}(\beta^*)^{\text{PA}(Y)}$, 
which again contradicts \eqref{eq:gamma_subset_eq}. This concludes the proof 
of the second statement.
\end{proof}

\subsection{Proof of Theorem~\ref{thm:q_stat}}
\begin{proof}
First, observe that using $S_{n_a,n_b}$ as defined in Lemma~\ref{lem:s_beta_asymptotic}, we can express the Q statistic for all $\beta\in\mathbb{R}^d$ as
\begin{equation*}
    Q(\beta)=S_{n_a,n_b}(\beta)^{\top}S_{n_a,n_b}(\beta).
\end{equation*}
Moreover, for all $\beta\in\cB^{\text{sum}}$ it holds by definition that $\mu_{n_a,n_b}=0$, hence Lemma~\ref{lem:s_beta_asymptotic} implies that $S_{n_a,n_b}(\beta)$ converges uniformly to a standard Gaussian distribution as $n_a,n_b$ tend to infinity and $n_a/n_b\rightarrow r$ for $r\in(0,\infty)$. Hence, by the continuous mapping theorem it holds that
\begin{equation*}
\lim_{\substack{n_a,n_b\rightarrow\infty\\n_a/n_b\rightarrow r}}\sup_{\substack{P\in\mathcal{P}:\\\beta\in\cB^{\text{sum}}(P)}}\sup_{t\in\mathbb{R}}\left|\mathbb{P}_P(Q(\beta)\leq t)-\kappa_m(t)\right|=0,
\end{equation*}
which completes the proof of Theorem~\ref{thm:q_stat}.
\end{proof}

\subsection{Proof of Theorem~\ref{thm:consistency_subset}}

\begin{proof}
Let $r\in (0,\infty)$ and assume that $n_a / n_b\rightarrow r$ throughout the proof. 
Using $S_{n_a,n_b}$ as defined in Lemma~\ref{lem:s_beta_asymptotic}, we can express the Q statistic for all $\beta\in\mathbb{R}^d$ as
\begin{equation*}
    Q(\beta)=S_{n_a,n_b}(\beta)^{\top}S_{n_a,n_b}(\beta).
\end{equation*}
Furthermore, let $\overline{\cB}\subseteq\mathbb{R}^d$ be a compact set and choose $\overline{\beta}\in\overline{\cB}$ such that $\inf_{\beta\in\overline{\cB}}\|S_{n_a,n_b}(\beta)\|_2^2=\|S_{n_a,n_b}(\beta^*)\|_2^2$. Then, using standard probability bounds and dropping the $n_a,n_b$ from the notation for simplicity, we get for all $P\in\mathcal{P}$ and all $t\in[0,\infty)$ that
\begin{align}
    &\mathbb{P}_P\left(\inf_{\beta\in\overline{\cB}}\|S(\beta)\|_2^2\leq t\right)\\
    &=\mathbb{P}_P\left(\|S(\overline{\beta})-\mu(\overline{\beta})+\mu(\overline{\beta})\|_2\leq \sqrt{t}\right)\nonumber\\
    &\leq\mathbb{P}_P\left(\big|\|S(\overline{\beta})-\mu(\overline{\beta})\|_2-\|\mu(\overline{\beta})\|_2\big|\leq \sqrt{t}\right)\nonumber\\
    &=\mathbb{P}_P\left(\|S(\overline{\beta})-\mu(\overline{\beta})\|_2-\|\mu(\overline{\beta})\|_2\leq \sqrt{t}, \|S(\overline{\beta})-\mu(\overline{\beta})\|\geq\|\mu(\overline{\beta})\|\right)\nonumber\\
    &\qquad+\mathbb{P}_P\left(\|\mu(\overline{\beta})\|_2-\|S(\overline{\beta})-\mu(\overline{\beta})\|_2\leq \sqrt{t}, \|S(\overline{\beta})-\mu(\overline{\beta})\|\leq\|\mu(\overline{\beta})\|\right)\nonumber\\
    &\leq\mathbb{P}_P\left(\|S(\overline{\beta})-\mu(\overline{\beta})\|\geq\|\mu(\overline{\beta})\|\right)\nonumber\\
    &\qquad+
    \mathbb{P}_P\left(\|\mu(\overline{\beta})\|_2-\|S(\overline{\beta})-\mu(\overline{\beta})\|_2\leq \sqrt{t}\right)\nonumber\\
    &\leq 2\mathbb{P}_P\left(\|S(\overline{\beta})-\mu(\overline{\beta})\|_2\geq \|\mu(\overline{\beta})\|_2 - \sqrt{t}\right)\nonumber\\
    &\leq 2\mathbb{P}_P\left(\sup_{\beta\in\overline{\cB}}\|S(\beta)-\mu(\beta)\|_2\geq \inf_{\beta\in\overline{\cB}}\|\mu(\beta)\|_2 - \sqrt{t}\right).\label{eq:upper_bound_infQ}
\end{align}
Next, observe that
\begin{align*}
    S(\beta)
    &=\sqrt{n_b}\left(\tfrac{n_b}{n_a}\widehat{\Sigma}_\pi + \beta^{\top}\widehat{\Sigma}_X\beta\widehat{\Sigma}_{Z_b}^{-1}\right)^{-1/2}\left(\pi-\Pi\beta\right)\\
    &=\sqrt{n_b}\left(\tfrac{n_b}{n_a}\widehat{\Sigma}_\pi + (\beta/\|\beta\|_2)^{\top}\widehat{\Sigma}_X(\beta/\|\beta\|_2)\widehat{\Sigma}_{Z_b}^{-1}\right)^{-1/2}\left(\pi-\Pi(\beta/\|\beta\|_2)\right),
\end{align*}
where $\widehat{\Sigma}_{X}\coloneqq\frac{1}{n_b}\sum_{i=1}^{n_b}(X_{bi}-\widehat{\Pi}^{\top}Z_{bi})(X_{bi}-\widehat{\Pi}^{\top}Z_{bi})^{\top}$. This in particular implies that $S$ and hence $Q$ does not depend on the norm of $\beta$. Moreover, for all $\beta\in\mathbb{R}^d$ with $\|\beta\|_2=1$ it holds that
\begin{align*}
    \|S(\beta)-\mu(\beta)\|_2
    &=\sqrt{n_b}\|(\tfrac{n_b}{n_a}\widehat{\Sigma}_\pi + \beta^{\top}\widehat{\Sigma}_X\beta\widehat{\Sigma}_{Z_b}^{-1})^{-1/2}((\pi-\Pi\beta)-(\widehat\pi-\widehat\Pi\beta))\|_2\\
    &\leq \sqrt{n_b}\|\tfrac{n_b}{n_a}\widehat{\Sigma}_\pi + \beta^{\top}\widehat{\Sigma}_X\beta\widehat{\Sigma}_{Z_b}^{-1}\|_{\operatorname{op}}^{-1/2}(\|\pi-\widehat{\pi}\|_2 + \|\Pi\beta-\widehat\Pi\beta\|_2)\\
    &\leq \left(\lambda_{\min}(\tfrac{n_b}{n_a}\widehat{\Sigma}_\pi) + \lambda_{\max}(\beta^{\top}\widehat{\Sigma}_X\beta\widehat{\Sigma}_{Z_b}^{-1})\right)^{-1/2}(\sqrt{n_b}\|\pi-\widehat{\pi}\|_2 + \sqrt{n_b}\|\Pi-\widehat\Pi\|_{\operatorname{op}})\\
    &\leq \left(\lambda_{\min}(\widehat{\Sigma}_X)\lambda_{\max}(\widehat{\Sigma}_{Z_b}^{-1})\right)^{-1/2}(\sqrt{n_b}\|\pi-\widehat{\pi}\|_2 + \sqrt{n_b}\|\Pi-\widehat\Pi\|_{\operatorname{op}})\\
    &\leq \left(\tfrac{\lambda_{\min}(\widehat{\Sigma}_{Z_b})}{\lambda_{\min}(\widehat{\Sigma}_X)}\right)^{1/2}(\sqrt{n_b}\|\pi-\widehat{\pi}\|_2 + \sqrt{n_b}\|\Pi-\widehat\Pi\|_{\operatorname{op}}).
\end{align*}
where $\|\cdot\|_{\operatorname{op}}$ denotes the operator norm, and we used Weyl's inequality for the second inequality and that $\beta$ has norm one for the last inequality. Hence, using the bounds on the minimal eigenvalues of $\Sigma_X$ and $\Sigma_{Z_b}$ in Assumption~\ref{ass:regularity}, it holds that
\begin{equation}
\label{eq:sup_bound}
    \sup_{\beta\in\mathbb{R}^d:\|\beta\|_2=1}\|S(\beta)-\mu(\beta)\|_2=\mathcal{O}_{\mathcal{P}}(1)
\end{equation}
as $n_a,n_b$ tend to infinity, where $\mathcal{O}_{\mathcal{P}}(1)$ denotes a uniformly bounded random variable with respect to $\mathcal{P}$. Finally, for all $s\in[d]$ define $\overline{\cB}_s\coloneqq\{\beta\in\mathbb{R}^d\mid \|\beta\|_0=s \text{ and } \|\beta\|_2=1\}$. Then, using that $Q$ does not depend on the scale of $\beta$ and \eqref{eq:upper_bound_infQ} we get that
\begin{align}
    \mathbb{P}_P\left(\inf_{\beta:\|\beta\|_0=s}Q(\beta)\leq t\right)
    &=\mathbb{P}_P\left(\inf_{\beta\in\overline{\cB}_s}Q(\beta)\leq t\right)\nonumber\\
    &\leq 2\mathbb{P}_P\left(\sup_{\beta\in\mathbb{R}^d:\|\beta\|_2=1}\|S(\beta)-\mu(\beta)\|_2\geq \inf_{\beta\in\overline{\cB}_s}\|\mu(\beta)\|_2 - \sqrt{t}\right).\label{eq:inf_Q_bound}
\end{align}
Now for the first statement of Theorem~\ref{thm:consistency_subset}, fix $s \in\bbN$ such that $s < \norm{\beta^*}_0 = |\text{PA}(Y)|$.
It follows from Theorem~\ref{thm:identify_truebeta} that for all $\beta\in\bbR^d$ 
with $\norm{\beta}_0$ = s, $\pi - \Pi\beta \neq 0$. 
Therefore, there exists $\epsilon > 0$ such that for all $\beta \in\bbR^d$ with 
$\norm{\beta}_0 = s$, it holds that $\norm{\pi-\Pi\beta}_2 > \epsilon$. 
Therefore, by \eqref{eq:inf_Q_bound} it holds that
\begin{align*}
&\lim_{n_a,n_b\to\infty} \mathbb{P}_P\left(\varphi_s(\cD_{a,b}^{\text{joint}}) = 1 \right)\\ 
&= \lim_{n_a,n_b \to \infty} \mathbb{P}_P\left(\inf_{\beta: \norm{\beta}_0 = s} Q(\beta) > \kappa_m(1-\alpha)\right) \\ 
&\geq 1 - \lim_{n_a,n_b\to\infty} 2\mathbb{P}_P\left(\sup_{\beta\in\mathbb{R}^d:\|\beta\|_2=1}\|S(\beta)-\mu(\beta)\|_2\geq \inf_{\beta\in\overline{\cB}_s}\|\mu_{n_a,n_b}(\beta)\|_2 - \sqrt{\kappa_m(1-\alpha)}\right)\\
&= 1,
\end{align*}
where we used \eqref{eq:sup_bound} together with
\begin{align*}
    \lim_{n_a,n_b\to\infty}\inf_{\beta\in\overline{\cB}_s}\|\mu_{n_a,n_b}(\beta)\|_2
    &\geq \lim_{n_a,n_b\to\infty}\inf_{\beta\in\overline{\cB}_s}\sqrt{n_b}\|\tfrac{n_b}{n_a}\widehat{\Sigma}_\pi + \beta^{\top}\widehat{\Sigma}_X\beta\widehat{\Sigma}_{Z_b}^{-1}\|^{-1/2}_{\operatorname{op}}\epsilon\\
    &\geq \lim_{n_a,n_b\to\infty}\inf_{\beta\in\overline{\cB}_s}\left(\frac{1}{n_a}\lambda_{\max}(\widehat\Sigma_\pi)+\tfrac{1}{n_b}\tfrac{\lambda_{\max}(\widehat{\Sigma}_{X})}{\lambda_{\min}(\widehat{\Sigma}_{Z_b})}\right)^{-1/2}\epsilon\\
    &=\infty,
\end{align*}
where we again used the bounds on the minimal eigenvalues of $\Sigma_X$ and $\Sigma_{Z_b}$ in Assumption~\ref{ass:regularity}.
Since this holds for all $s\in[d]$ with $s < \norm{\beta^*}_0$, we further get 
\begin{align*}
\lim_{n_a,n_b\to\infty} \mathbb{P}_P\left(\norm{\beta^{\leq s_{\text{max}}}}_0 = \norm{\beta^*}_0\right) 
&= \lim_{n_a,n_b\to\infty} \mathbb{P}_P \left(\min_{s < \norm{\beta^*}_0} \varphi_s = 1 \text{ and } \varphi_{\norm{\beta^*}_0} = 0 \right) \\
&= \lim_{n_a,n_b\to\infty} \mathbb{P}_P \left(\varphi_{\norm{\beta^*}_0} = 0\right) \\
&\geq 1-\alpha.
\end{align*}
For the second statement of Theorem~\ref{thm:consistency_subset}, we can use the same argument. In this case, Theorem~\ref{thm:identify_truebeta} implies that for all $c>0$ there exists $\epsilon>0$ such that for all $\beta\in\mathbb{R}^d$ with either $\|\beta\|_0<\|\beta^*\|_0$ or $\|\beta-\beta^*\|>\epsilon$ and $\|\beta\|_0=\|\beta^*\|_0$ it holds that $\|\pi-\Pi\beta\|_2>\epsilon$. Therefore, $\mu_{n_a,n_b}(\beta)$ again diverges and the arguments above remain valid.
This completes the proof of Theorem~\ref{thm:consistency_subset}.
\end{proof}

\section{Additional results} \label{app:additional_results}

\begin{remark} \label{rmk:sig_pi_beta}
In the definition of the (empirical) Q statistic in Theorem~\ref{thm:q_stat}, we used 
\begin{align} \label{eq:sig_pi_beta_kronecker}
\widehat\Sigma_\Pi(\beta) \coloneqq \xi(\beta)\widehat\Sigma_\Pi\xi^\top(\beta) 
\end{align}
where $\xi(\beta) \coloneqq \beta^\top\otimes I_m$. 
It follows from the properties of Kronecker product that \eqref{eq:sig_pi_beta_kronecker} is equivalent to
\begin{align} \label{eq:sig_pi_beta_resid}
\widehat\Sigma_\Pi(\beta) \coloneqq (\beta^\top(\bX_b - \bZ_b\widehat\Pi)^\top (\bX_b - \bZ_b\widehat\Pi)\beta) (\bZ_b^\top\bZ_b)^{-1},
\end{align}
which aligns with its population quantity 
$\Sigma_\Pi(\beta) \coloneqq (\beta^{\top}\mathbb{E}[u_b^X(u_b^X)^{\top}]\beta) \mathbb{E}[Z_bZ_b^{\top}]^{-1}$ 
used in Lemma~\ref{lem:s_beta_asymptotic}, where $u_b^X$ is the population residual in \eqref{eq:reduced_scm}. 
The reason why \eqref{eq:sig_pi_beta_kronecker} is used instead of \eqref{eq:sig_pi_beta_resid} in the Q statistic 
is that \eqref{eq:sig_pi_beta_kronecker} only relies on the joint summary statistics, as the individual-level data is 
not available under the two-sample summary statistics setting.
\end{remark}

\begin{lemma} \label{lem:s_beta_asymptotic}
Assume Assumption~\ref{ass:regularity}. Let $\mathcal{D}_{a,b}^{\text{joint}}=\left\{\widehat\pi, \widehat\Sigma_\pi, \widehat\Pi, \widehat\Sigma_\Pi\right\}$ 
be the joint summary statistics based on two independent samples of sizes $n_a$ and $n_b$, respectively. For all  
$\beta\in\bbR^d$, define
\begin{equation*}
    S_{n_a, n_b}(\beta) \coloneqq \left(\tfrac{1}{n_a}\widehat\Sigma_\pi + \tfrac{1}{n_b}
\widehat\Sigma_\Pi(\beta) \right)^{-1/2}
(\widehat\pi - \widehat\Pi\beta)
\end{equation*}
and
\begin{equation*}
    \mu_{n_a,n_b}(\beta)\coloneqq\left(\tfrac{1}{n_a}\widehat\Sigma_\pi + \tfrac{1}{n_b}
\widehat\Sigma_\Pi(\beta) \right)^{-1/2}
(\pi - \Pi\beta).
\end{equation*}
Then, for all $\beta\in\mathbb{R}^d$ and all $r\in (0 ,\infty)$ it holds that
\begin{equation*}    \lim_{\substack{n_a,n_b\rightarrow\infty\\n_a/n_b\rightarrow r}}\sup_{P\in\mathcal{P}}\sup_{t\in\mathbb{R}^m} \left|\mathbb{P}_P\left(S_{n_a, n_b}(\beta)-\mu_{n_a,n_b}(\beta)\leq t\right) - \Phi_m(t)\right| = 0.
\end{equation*}
\end{lemma}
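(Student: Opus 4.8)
The plan is to reduce the statement to a uniform multivariate central limit theorem for a centered, two-sample average, and then to upgrade the standardization from the (population) covariance to the empirical plug-in covariance $V_{n_a,n_b}(\beta)\coloneqq\tfrac{1}{n_a}\widehat\Sigma_\pi+\tfrac{1}{n_b}\widehat\Sigma_\Pi(\beta)$ via a uniform Slutsky argument. First I would write
$$S_{n_a,n_b}(\beta)-\mu_{n_a,n_b}(\beta)=V_{n_a,n_b}(\beta)^{-1/2}\big((\widehat\pi-\pi)-(\widehat\Pi-\Pi)\beta\big),$$
and, using the reduced form \eqref{eq:reduced_scm}, express the two estimation errors as sample averages
$$\widehat\pi-\pi=\Big(\tfrac{1}{n_a}\bZ_a^\top\bZ_a\Big)^{-1}\tfrac{1}{n_a}\sum_{i=1}^{n_a}Z_{ai}u_{ai}^Y,\qquad (\widehat\Pi-\Pi)\beta=\Big(\tfrac{1}{n_b}\bZ_b^\top\bZ_b\Big)^{-1}\tfrac{1}{n_b}\sum_{i=1}^{n_b}Z_{bi}(u_{bi}^X)^\top\beta.$$
Since $Z$ is independent of the reduced-form errors each summand is mean zero, and since the two samples are independent the two terms are independent. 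Writing $W_{n_a,n_b}\coloneqq(\widehat\pi-\pi)-(\widehat\Pi-\Pi)\beta$, the covariance of $W_{n_a,n_b}$ is, up to asymptotically negligible terms, $\widetilde V_{n_a,n_b}(\beta)\coloneqq\tfrac{1}{n_a}\Sigma_\pi+\tfrac{1}{n_b}\Sigma_\Pi(\beta)$, with $\Sigma_\pi,\Sigma_\Pi(\beta)$ the population limits identified in Remark~\ref{rmk:sig_pi_beta}. The object of interest is thus the self-normalized sum $V_{n_a,n_b}(\beta)^{-1/2}W_{n_a,n_b}$, and the identity limiting covariance $\Phi_m$ arises precisely because $\widetilde V_{n_a,n_b}(\beta)$ is (asymptotically) the covariance of $W_{n_a,n_b}$.

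Next I would establish the uniform central limit theorem for the population-standardized sum $\widetilde V_{n_a,n_b}(\beta)^{-1/2}W_{n_a,n_b}$. I would apply a non-asymptotic multivariate Berry--Esseen bound (of Bentkus type) to each of the two independent sample sums separately and combine them using independence and the fixed ratio $n_a/n_b\to r\in(0,\infty)$. The Berry--Esseen error is controlled by a ratio of a third absolute moment of the summands to $\sqrt{n}$ times a power of the smallest eigenvalue of the summand covariance, and Assumption~\ref{ass:regularity} makes all of these ingredients uniform over $\cP$: the $4+\eta$ moment bounds on $X,Y,Z$ yield a $P$-free bound on the third absolute moments of $Z_{ai}u_{ai}^Y$ and $Z_{bi}(u_{bi}^X)^\top\beta$ via Hölder's inequality, while the lower eigenvalue bounds on $\bbE_P[ZZ^\top]$ and $\bbE_P[XX^\top]$ keep the summand covariances uniformly nondegenerate. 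This gives
$$\lim_{\substack{n_a,n_b\to\infty\\ n_a/n_b\to r}}\sup_{P\in\cP}\sup_{t\in\bbR^m}\Big|\bbP_P\big(\widetilde V_{n_a,n_b}(\beta)^{-1/2}W_{n_a,n_b}\le t\big)-\Phi_m(t)\Big|=0.$$
Along the way I would also record a uniform weak law $\tfrac{1}{n_s}\bZ_s^\top\bZ_s\pto\bbE_P[ZZ^\top]$ for $s\in\{a,b\}$ (Chebyshev suffices since fourth moments are uniformly bounded), so that the Gram-matrix inverse factors may be replaced by their population limits uniformly.

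It then remains to replace the population normalizer by the empirical one, i.e.\ to show $V_{n_a,n_b}(\beta)\,\widetilde V_{n_a,n_b}(\beta)^{-1}\pto I_m$ uniformly over $\cP$, equivalently $\widehat\Sigma_\pi\pto\Sigma_\pi$ and $\widehat\Sigma_\Pi(\beta)\pto\Sigma_\Pi(\beta)$ uniformly. The only nonroutine point here is that $\widehat\Sigma_\pi$ and $\widehat\Sigma_\Pi(\beta)$ are built from fitted residuals; I would expand $\tfrac{1}{n_a}\widehat\varepsilon_a^\top\widehat\varepsilon_a$ around the true residual variance and show that the cross terms involving $\widehat\pi-\pi$ vanish uniformly by the weak law just established, and similarly for $\widehat\Sigma_\Pi(\beta)$. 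Because the uniform eigenvalue bounds keep both $V_{n_a,n_b}(\beta)$ and $\widetilde V_{n_a,n_b}(\beta)$ uniformly away from singularity (after factoring out the common $n_b^{-1}$ scale), the map $V\mapsto V^{-1/2}$ is uniformly Lipschitz on the relevant region, so $V_{n_a,n_b}(\beta)^{-1/2}\widetilde V_{n_a,n_b}(\beta)^{1/2}\pto I_m$ uniformly. A uniform Slutsky argument transfers the central limit theorem to $S_{n_a,n_b}(\beta)-\mu_{n_a,n_b}(\beta)$, and since $\Phi_m$ is continuous, the uniform convergence in distribution upgrades to the claimed uniform convergence of the CDFs by a Polya-type uniformization.

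The main obstacle is the uniformity over $\cP$ rather than the pointwise limit, which is classical. Every step --- the central limit theorem, the weak laws for the Gram and residual-variance matrices, and the Slutsky replacement --- must be made uniform, and the cleanest route is to keep explicit, $P$-free quantitative bounds throughout: a Berry--Esseen rate whose constant depends only on $(C_1,C_2,c,\eta)$ of Assumption~\ref{ass:regularity}, together with Markov/Chebyshev tail bounds with $P$-free constants for the law-of-large-numbers steps. Verifying that these uniform moment and eigenvalue bounds genuinely propagate through the residual-based estimators $\widehat\Sigma_\pi$ and $\widehat\Sigma_\Pi(\beta)$, and keeping the inverse-square-root map uniformly well-conditioned, is where most of the care will be needed.
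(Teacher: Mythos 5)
Your overall architecture matches the paper's proof: you decompose $S_{n_a,n_b}(\beta)-\mu_{n_a,n_b}(\beta)$ into the plug-in normalizer applied to the estimation error $(\widehat\pi-\pi)-(\widehat\Pi-\Pi)\beta$, establish a uniform CLT for the population-standardized quantity, prove uniform operator-norm consistency of $\widehat\Sigma_\pi$ and $\widehat\Sigma_\Pi(\beta)$ by expanding the fitted residuals around the true ones, and transfer the limit through the inverse matrix square root by a uniform Slutsky argument. The one methodological divergence is the CLT ingredient itself: the paper cites an existing uniform asymptotic normality result for OLS (applied once to the regression of $Y$ on $Z$, and once to the regression of $\beta^\top X$ on $Z$, which is how it handles $(\widehat\Pi-\Pi)\beta$ in a single stroke), whereas you propose to prove the uniform CLT from scratch via a Bentkus-type Berry--Esseen bound with constants depending only on $(C_1,C_2,c,\eta)$. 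That is a legitimate and more self-contained route, and it automatically delivers the Kolmogorov-distance uniformity that the paper gets from its cited lemma.

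There is, however, a genuine gap in the moment bookkeeping of your Berry--Esseen step. You claim the $4+\eta$ moment bounds of Assumption~\ref{ass:regularity} yield a $P$-free bound on the \emph{third} absolute moments of the summands $Z_{ai}u^Y_{ai}$ via H\"older. They do not in general: to bound $\mathbb{E}[\|Z\|^3|u^Y|^3]$ by H\"older with conjugate exponents $p,q$ one needs $3p\le 4+\eta$ and $3q\le 4+\eta$ with $1/p+1/q=1$, and the best choice $p=q=2$ forces $6\le 4+\eta$, i.e.\ $\eta\ge 2$. Since the assumption only guarantees \emph{some} $\eta>0$, the classical third-moment Bentkus bound is not available for small $\eta$, and your quantitative route breaks exactly at the step you flagged as routine. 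The repair is standard but must be made explicit: either invoke a multivariate CLT rate under Lyapunov moments of order $2+\delta$ with $\delta=\eta/2$ (Cauchy--Schwarz then gives $\mathbb{E}[\|Zu^Y\|^{2+\eta/2}]\le(\mathbb{E}[\|Z\|^{4+\eta}])^{1/2}(\mathbb{E}[|u^Y|^{4+\eta}])^{1/2}$, which is uniformly bounded since $\|\pi\|$ is uniformly bounded by the eigenvalue and moment assumptions), or truncate the summands and control the truncation error using the $(4+\eta)$-th moments. With that substitution the rest of your argument (the uniform weak laws for the Gram and residual-variance matrices, the independence-convolution step combining the two samples under $n_a/n_b\to r$, the Lipschitz bound for $V\mapsto V^{-1/2}$ on a uniformly well-conditioned region, and the Polya-type uniformization) goes through and parallels the paper's proof.
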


\begin{proof}
Fix an arbitrary $\beta\in\bbR^d$. Using by standard uniform convergence results for the OLS estimator \citepAPP[e.g.,][Lemma~S10]{lundborg2022projected} it holds that
\begin{equation*}
    \sqrt{n_a}\Sigma_{\pi}^{-1/2}(\widehat{\pi}-\pi) 
\end{equation*}
with $\Sigma_\pi \coloneqq \mathbb{E}[(u_a^Y)^2]\mathbb{E}[Z_a Z_a^{\top}]^{-1}$ (where $u_a^Y$ are the population residuals in \eqref{eq:reduced_scm} for sample $a$) converges uniformly w.r.t.\ $\mathcal{P}$ to a standard $m$-variate Gaussian distribution as $n_a$ tends to infinity. Similarly, when considering the regression of $\beta^{\top}X$ on $Z$, it holds that
\begin{equation*}
    \sqrt{n_b}\Sigma_{\Pi}(\beta)^{-1/2}(\widehat{\Pi}-\Pi)\beta
\end{equation*}
with $\Sigma_\Pi(\beta) \coloneqq (\beta^{\top}\mathbb{E}[u_b^X(u_b^X)^{\top}]\beta) \mathbb{E}[Z_bZ_b^{\top}]^{-1}$ 
(where $u_b^X$ are the residuals in \eqref{eq:reduced_scm} for sample $b$) converges uniformly w.r.t.\ $\mathcal{P}$ to a standard $m$-variate Gaussian distribution as $n_b$ tends to infinity. Combining these results and using that $n_a / n_b \to r$ and $\widehat\pi$ and $\widehat\Pi$ are estimated based on independent samples, we further have that
\begin{equation} \label{eq:diff_dist}
\sqrt{n_b}\left(\tfrac{1}{r}\Sigma_{\pi}+\Sigma_{\Pi}(\beta)\right)^{-1/2}\left((\widehat\pi - \widehat\Pi\beta) - (\pi - \Pi\beta)\right)
\end{equation}
converges uniformly w.r.t.\ $\mathcal{P}$ to a standard $m$-variate Gaussian distribution as $n_a$ and $n_b$ tend to infinity.

Next, we show for all $\epsilon>0$ that
\begin{equation}
\label{eq:var_estimate_conv}
\lim_{n_a\rightarrow\infty}\sup_{P\in\mathcal{P}}\mathbb{P}_P\left(\|\widehat\Sigma_\pi - \Sigma_\pi\|_{\operatorname{op}}>\epsilon\right)=0
\quad\text{and}\quad
\lim_{n_b\rightarrow\infty}\sup_{P\in\mathcal{P}}\mathbb{P}_P\left(\|\widehat\Sigma_\Pi(\beta) - \Sigma_\Pi(\beta)\|_{\operatorname{op}}>\epsilon\right)=0.
\end{equation}
As the proofs for both results are the same we only show it for $\widehat{\Sigma}_{\pi}$. First, we express the estimator as
\begin{align*}
    \widehat{\Sigma}_{\pi}=\frac{1}{n_a}\sum_{i=1}^{n_a}(Y_{ai}-\widehat{\pi}^{\top}Z_{ai})^2\left(\frac{1}{n_a}\sum_{i=1}^{n_a}Z_{ai}Z_{ai}^{\top}\right)^{-1}.
\end{align*}
We now consider the two product terms separately. Using the uniform law of large numbers \citepAPP[e.g.,][Lemma 9]{klyne2023average} on each component, it holds for all $\epsilon>0$ that
\begin{equation}
\label{eq:conv_covariance_matrix}
    \lim_{n_a\rightarrow\infty}\sup_{P\in\mathcal{P}}\mathbb{P}_P\left(\left\|\frac{1}{n_a}\sum_{i=1}^{n_a}Z_{ai}Z_{ai}^{\top}-\mathbb{E}[Z_a Z_a^{\top}]\right\|_{\operatorname{op}}>\epsilon\right)=0.
\end{equation}
Moreover, we can expand the residual variance part as follows
\begin{align*}
    \frac{1}{n_a}\sum_{i=1}^{n_a}(Y_{ai}-\widehat{\pi}^{\top}Z_{ai})^2
    &=\frac{1}{n_a}\sum_{i=1}^{n_a}(Y_{ai}-\pi^{\top}Z_{ai})^2+
    \frac{1}{\sqrt{n_a}}\left(\frac{2}{n_a}\sum_{i=1}^{n_a}(Y_{ai}-\pi^{\top}Z_{ai})\sqrt{n_a}(\widehat{\pi}-\pi)^{\top}Z_{ai}\right)\\
    &\qquad
    +\frac{1}{n_a}\left(\sqrt{n_a}(\widehat{\pi}-\pi)\left(\frac{1}{n_a}\sum_{i=1}^{n_a}Z_{ai}Z_{ai}^{\top}\right)\sqrt{n_a}(\widehat{\pi}-\pi)\right).
\end{align*}
Then, by the uniform asymptotic normality, the bounded moments of $Z$ and $Y$ and a further application of the law of large numbers \citepAPP[e.g.,][Lemma 9]{klyne2023average} it follows for all $\epsilon>0$ that 
\begin{equation}
\label{eq:conv_res_variance}
    \lim_{n_a\rightarrow\infty}\sup_{P\in\mathcal{P}}\mathbb{P}_P\left(\left|\frac{1}{n_a}\sum_{i=1}^{n_a}(Y_{ai}-\widehat{\pi}^{\top}Z_{ai})^2-\mathbb{E}[(u_a^Y)^2]\right| >\epsilon\right)=0.
\end{equation}
Finally, denote $W_n\coloneqq \frac{1}{n_a}\sum_{i=1}^{n_a}(Y_{ai}-\widehat{\pi}^{\top}Z_{ai})^2$, $W\coloneqq \mathbb{E}[(u^Y_a)^2]$, $V_n\coloneqq\frac{1}{n_a}\sum_{i=1}^{n_a}Z_{ai}Z_{ai}^{\top}$ and $V\coloneqq\mathbb{E}[Z_a Z_a^{\top}]$. Then, by combining \eqref{eq:conv_covariance_matrix} and \eqref{eq:conv_res_variance} it follows for all $\epsilon>0$ that
\begin{align*}
    &\sup_{P\in\mathcal{P}}\mathbb{P}_P(\|W_nV_n^{-1}-WV^{-1}\|_{\operatorname{op}}>\epsilon)\\
    &\quad\leq\sup_{P\in\mathcal{P}}\mathbb{P}_P(\|W_nV_n^{-1}-W_nV^{-1}\|_{\operatorname{op}}>\tfrac{\epsilon}{2})+\sup_{P\in\mathcal{P}}\mathbb{P}_P(\|W_nV^{-1}-WV^{-1}\|_{\operatorname{op}}>\tfrac{\epsilon}{2})\\
    &\quad\leq\sup_{P\in\mathcal{P}}\mathbb{P}_P(\|W_nV_n^{-1}-W_nV^{-1}\|_{\operatorname{op}}>\tfrac{\epsilon}{2})+\sup_{P\in\mathcal{P}}\mathbb{P}_P(\|W_nV^{-1}-WV^{-1}\|_{\operatorname{op}}>\tfrac{\epsilon}{2})\\
    &\quad\leq\sup_{P\in\mathcal{P}}\mathbb{P}_P(\|V_n^{-1}-V^{-1}\|_{\operatorname{op}}\|W_n\|_{\operatorname{op}}>\tfrac{\epsilon}{2})+\sup_{P\in\mathcal{P}}\mathbb{P}_P(\|W_n-W\|_{\operatorname{op}}\|V^{-1}\|_{\operatorname{op}}>\tfrac{\epsilon}{2})\\
    &\quad\leq\sup_{P\in\mathcal{P}}\mathbb{P}_P(\|V_n^{-1}-V^{-1}\|_{\operatorname{op}}\|W_n\|_{\operatorname{op}}>\tfrac{\epsilon}{2})+\sup_{P\in\mathcal{P}}\mathbb{P}_P(\|W_n-W\|_{\operatorname{op}}C>\tfrac{\epsilon}{2})\\
\end{align*}
By standard arguments and using the lower bound on the minimal eigenvalue of $V=\mathbb{E}[ZZ^{\top}]$ from Assumption~\ref{ass:regularity}, this proves \eqref{eq:var_estimate_conv} (left).

Combining the two convergence results in \eqref{eq:var_estimate_conv} and using that $n_a/n_b\to r$ shows that for all $\epsilon>0$ it holds that
\begin{equation} \label{eq:conv_sigmas}
    \lim_{\substack{n_a,n_b\rightarrow\infty\\n_a/n_b\rightarrow r}}\sup_{P\in\mathcal{P}}\mathbb{P}_P\left(\|(\tfrac{n_b}{n_a}\widehat\Sigma_\pi + \widehat{\Sigma}_{\Pi}(\beta)) - (\tfrac{1}{r}\Sigma_\pi+\Sigma_{\Pi}(\beta))\|_{\operatorname{op}}>\epsilon\right)=0.
\end{equation}
Furthermore, we can apply \citetAPP[][eq.~(7.2.13)]{johnson1985matrix} to get that
\begin{align*}
&\|(\tfrac{n_b}{n_a}\widehat\Sigma_\pi + \widehat\Sigma_\Pi(\beta))^{1/2} -  (\tfrac{1}{r}\Sigma_\pi + \Sigma_\Pi(\beta))^{1/2}\|_{\operatorname{op}} \\
&\leq \|(\tfrac{1}{r}\Sigma_\pi + \Sigma_\Pi(\beta))^{-1/2}\|_{\operatorname{op}} \|(\tfrac{n_b}{n_a}\widehat\Sigma_\pi + \widehat\Sigma_\Pi(\beta)) -  (\tfrac{1}{r}\Sigma_\pi + \Sigma_\Pi(\beta))\|_{\operatorname{op}},
\end{align*}
which together with \eqref{eq:conv_sigmas} and  since Assumption~\ref{ass:regularity} implies that $\inf_{P\in\mathcal{P}}\lambda_{\min}(\frac{1}{r}\Sigma_\pi + \Sigma_\Pi(\beta))>0$, implies for all $\epsilon > 0$ that
\begin{align*}
\lim_{\substack{n_a,n_b\rightarrow\infty\\n_a/n_b\rightarrow r}}\sup_{P\in\mathcal{P}}\mathbb{P}_P(\|(\tfrac{n_b}{n_a}\widehat\Sigma_\pi + \widehat\Sigma_\Pi(\beta))^{1/2} -  (\tfrac{1}{r}\Sigma_\pi + \Sigma_\Pi(\beta))^{1/2}\|_{\operatorname{op}} > \epsilon) = 0. 
\end{align*}
Together with \eqref{eq:diff_dist} this implies by \citetAPP[Lemma~10 (b)]{klyne2023average} that
\begin{align*}
    \lim_{\substack{n_a,n_b\rightarrow\infty\\n_a/n_b\rightarrow r}}\sup_{P\in\mathcal{P}}\sup_{t\in\mathbb{R}^m}\left|\mathbb{P}_P\left(S_{n_a,n_b}(\beta)-\mu_{n_a,n_b}(\beta)\leq t\right)-\Phi_m(t)\right|=0,
\end{align*}
where we in particular used that
\begin{align*}
&S_{n_a,n_b}(\beta)-\mu_{n_a,n_b}(\beta)\\
&\quad=(\tfrac{1}{n_a}\widehat{\Sigma}_{\pi}+\tfrac{1}{n_b}\widehat{\Sigma}_{\Pi}(\beta))^{-1/2}((\widehat{\pi}-\widehat{\Pi}\beta) - (\pi-\Pi\beta)) \\
&\quad=\left((\tfrac{1}{r}\Sigma_{\pi}+\Sigma_{\Pi}(\beta))^{-1/2}(\tfrac{n_b}{n_a}\widehat{\Sigma}_{\pi}+\widehat{\Sigma}_{\Pi}(\beta))^{1/2}\right)^{-1}
    \sqrt{n_b}(\tfrac{1}{r}\Sigma_{\pi}+\Sigma_{\Pi}(\beta))^{-1/2}((\widehat{\pi}-\widehat{\Pi}\beta) - (\pi-\Pi\beta))\\
&\quad= \left(I + 
(\tfrac{1}{r}\Sigma_\pi + \Sigma_\Pi(\beta))^{-1/2}\{(\tfrac{n_b}{n_a}\widehat\Sigma_\pi + \widehat\Sigma_\Pi(\beta))^{1/2} -  (\tfrac{1}{r}\Sigma_\pi + \Sigma_\Pi(\beta))^{1/2}\}\right)^{-1}\\
&\quad\qquad\qquad\qquad\qquad\qquad\qquad\qquad\qquad\cdot\sqrt{n_b}(\tfrac{1}{r}\Sigma_{\pi}+\Sigma_{\Pi}(\beta))^{-1/2}((\widehat{\pi}-\widehat{\Pi}\beta) - (\pi-\Pi\beta)).
\end{align*}
This completes the proof of Lemma~\ref{lem:s_beta_asymptotic}.
\end{proof}

\section{Experiment details and additional simulation results} \label{app:experiment}

\subsection{Details of the simulated experiments in Section~\ref{sec:simulation}} \label{app:experiment_detail}

\textbf{DGP1:} The individual-level data are generated from an SCM \eqref{eq:individual_scm} with 
the following parameters
\begin{align*}
A \coloneqq \begin{pmatrix}
0 & 0 & 0 \\
0 & 0 & 0 \\
1 & 0 & 0 \\
0 & 1 & 0 \\
0 & 0 & 1
\end{pmatrix} 
\quad
\text{and}
\quad
B \coloneqq \begin{pmatrix}
0 & 0 & 0 & 0 & 0 \\
0 & 0 & 0 & 0 & 0 \\
1 & 0 & 0 & 0 & 0 \\
1 & 1 & 0 & 0 & 0 \\
0 & 1 & 0 & 0 & 0,
\end{pmatrix}
\end{align*}
$Z \iid \cN_m(0, I_m)$, $H \iid \cN_d(0, I_d)$ and $\nu^X, \nu^Y\iid\cN(0,1)$ 
with $g(H, \nu^X) \coloneqq H + \nu^X$ and 
$h(H, \nu^Y) \coloneqq H^\top\bm{1}_d + \nu^Y$. 
The true causal effect $\beta^* = (1, 2, 0, 0, 0)$.

\noindent\textbf{DGP2:} Let 
\begin{align*}
A \coloneqq \begin{pmatrix}
0 & 0 & 0 \\
0 & 0 & 0 \\
1 & 0 & 0 \\
0 & 1 & 0 \\
0 & 0 & 1
\end{pmatrix},
\hspace{5pt}
B \coloneqq \begin{pmatrix}
0 & 0 & 0 & 0 & 0 \\
0 & 0 & 0 & 0 & 0 \\
1 & 0 & 0 & 0 & 0 \\
1 & 1 & 0 & 0 & 0 \\
0 & 2 & 0 & 0 & 0,
\end{pmatrix},
\hspace{5pt}
\var(Z) \coloneqq \begin{pmatrix}
1 & 0.05 & -0.1 & 0.075 & 0.025 \\ 
0.05 & 1 & 0 & 0 & 0 \\
-0.1 & 0 & 1 & 0 & 0 \\ 
0.075 & 0 & 0 & 1 & 0 \\
0.025 & 0 & 0 & 0 & 1
\end{pmatrix},
\end{align*}
$\var(\nu^X) \coloneqq WW^\top + I_d$ 
where $W \in\bbR^{d\times d}$ with $W_i^j \iid \text{Unif}(-0.3, 0.5)$ for all $i,j\in[d]$, 
$\var(\nu^Y) \coloneqq 1$, and $\cov(\nu^X, \nu^Y) \in\bbR^{100}$ such that 
$\cov(\nu^X, \nu^Y)^j$ is uniformly sampled from the set $\{0.2, 0.4, 0.6, 0.8\}$ 
for all $j\in\{1,\ldots,100\}$.

Then using $\beta^* \in\bbR^{100}$ with $(\beta^*)^1 \coloneqq 1$, $(\beta^*)^2 \coloneqq 2$, and 
$(\beta^*)^j \coloneqq 0$ for all $j\in\{3,\ldots,100\}$, we 
define $\Pi \coloneqq A^\top (I_d - B)^{-1}$ and $\pi \coloneqq \Pi\beta^*$. Moreover, 
based on the linear SCM and with $V \coloneqq (I_d - B)^{-1} \beta^*$ we have 
\begin{align*}
\Sigma_\pi &= \left(V^\top \var(\nu^X) V + \var(\nu^Y) + 2 V^\top\cov(\nu^X, \nu^Y)\right) \var(Z)^{-1} \text{ and } \\
\Sigma_\Pi &= \var(\nu^X)^{\top} (I_d - B)^{-1} \otimes \var(Z)^{-1}.
\end{align*} 
We then generated $\widehat\pi$, $\widehat\Pi$ from the following  
multivariate Gaussian distributions for a specific sample size $n$:
\begin{align*}
\widehat\pi_n \sim \cN(\pi, \frac{1}{n}\Sigma_\pi) \qquad \text{and} \qquad
\widehat\Pi_n \sim \cN(\Pi, \frac{1}{n}\Sigma_\Pi). 
\end{align*}

\subsection{Additional simulated experiments}

We provide additional simulation results of a setting that the 
exclusion restriction criteria of IV is violated. The DGP is 
described below and the corresponding DAG is given in Figure~\ref{fig:dag_dgp3}.

\paragraph{DGP3:} $m = 5$ and $d = 5$ and $\norm{\beta^*}_0 = 2$. For increasing 
$n \coloneqq n_1 = n_2$, we generate iid $\{(Y_i, Z_i)\}_{i=1}^{n_1}$ and 
$\{(X_i, Z_i)\}_{i=1}^{n_2}$ according to the following SCM 
\begin{align} \label{eq:scm_dgp3}
\begin{split}
X_i &\coloneqq AZ_i + BX_i + H_i + \nu_i^X \\
Y_i &\coloneqq X_i^\top \beta^* + Z_i^\top\gamma + H_i^\top\bm{1}_5 + \nu_i^Y,
\end{split}
\end{align}
with the following parameters:
\begin{align*}
A &= I_5, \quad
B \coloneqq \begin{pmatrix}
0 & 0 & 0 & 0 & 0 \\
0 & 0 & 0 & 0 & 0 \\
1 & 0 & 0 & 0 & 0 \\
1 & 1 & 0 & 0 & 0 \\
0 & 1 & 0 & 0 & 0
\end{pmatrix}, \quad
\gamma = (0.1, 0.1),  \quad
\beta^* = (1, 2, 0, 0, 0), 
\end{align*}
$H_i \iid\cN(0, I_5)$, and $\nu_i^X, \nu_i^Y \iid\cN(0,1)$. 
Then we compute the summary statistics using seemingly unrelated regression. 
The results are shown in  Figure~\ref{fig:dgp3_eval}. 
The $\gamma$ parameter in \eqref{eq:scm_dgp3} represents the violation of the exclusion restriction 
criteria. We see that as sample size goes larger, the bias and rmse 
continue to decrease. Although the Jaccard similarity and percentage of 
correct size start to decline, the average true positive rate (tpr) still 
stays around $100\%$. In this example, due to the invalid instruments,
the estimated causal parents tend to be a superset of the true causal
parent, but the estimated effects of the non-parent covaraites are 
relatively small. 

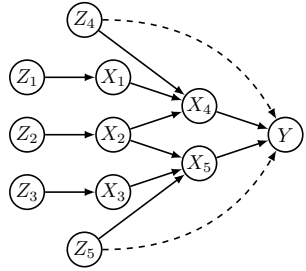
\begin{figure}
\centering
  \resizebox*{0.27\textwidth}{!}{
    \begin{tikzpicture}
    % Style definitions
    \tikzstyle{var}=[circle, draw, thick, minimum size=6mm, font=\small, inner sep=0]
    \tikzstyle{arrow}=[-latex, thick]
    \tikzstyle{doublearrow}=[latex-latex, thick]
    \tikzstyle{dashedarrow}=[-latex, thick, dashed]

    % Nodes for I_1 to I_J (G variables in sketch)
    \node[var] (I4) at (1, 4) {$Z_4$};
    \node[var] (I5) at (1, 0) {$Z_5$};
    \node[var] (I1) at (0, 3) {$Z_1$};
    \node[var] (I2) at (0, 2) {$Z_2$};
    \node[var] (I3) at (0, 1) {$Z_3$};

    % Nodes for X_1 to X_R
    \node[var] (X4) at (3, 2.5) {$X_4$};
    \node[var] (X5) at (3, 1.5) {$X_5$};
    \node[var] (X1) at (1.5, 3) {$X_1$};
    \node[var] (X2) at (1.5, 2) {$X_2$};
    \node[var] (X3) at (1.5, 1) {$X_3$};

    % Node for H
    % \node[var] (H) at (3.2, 3) {$H$};

    % Node for Y
    \node[var] (Y) at (4.5, 2) {$Y$};

    % % Double-arrow arcs between I variables (G variables in sketch)
    % \draw[doublearrow, bend right=45] (I1) to (I3);
    % \draw[doublearrow, bend left=45] (I2) to (I5);
    % \draw[doublearrow, bend right=45] (I3) to (I4);
    % \draw[doublearrow, bend right=45] (I4) to (I5);
    % \draw[doublearrow, bend right=45] (I3) to (I5);

    % % Arrows from some I variables to some X variables
    \draw[arrow] (I1) -- (X1);
    \draw[arrow] (I2) -- (X2);
    \draw[arrow] (I3) -- (X3);
    \draw[arrow] (I4) -- (X4);
    \draw[arrow] (I5) -- (X5);
    \draw[dashedarrow, bend left] (I4) to (Y);
    \draw[dashedarrow, bend right] (I5) to (Y);

    % % H points to both X1 and Y
    % \draw[arrow] (H) -- (X1);
    % \draw[arrow] (H) -- (X2);
    % \draw[arrow] (H) -- (XR);
    % \draw[arrow] (H) -- (Y);

    % Some X points to X
    \draw[arrow] (X1) -- (X4);
    \draw[arrow] (X2) -- (X4);
    \draw[arrow] (X2) -- (X5);
    \draw[arrow] (X3) -- (X5);

    % Some X points to Y
    \draw[arrow] (X4) -- (Y);
    \draw[arrow] (X5) -- (Y);

\end{tikzpicture}
  } 
\caption{DAG for DGP3 which contains two invalid instruments violating 
the exclusion restriction criteria (dashed arrows).}
\label{fig:dag_dgp3}
\end{figure}

\begin{figure}[!htb]
\centering
\includegraphics[width=0.9\linewidth]{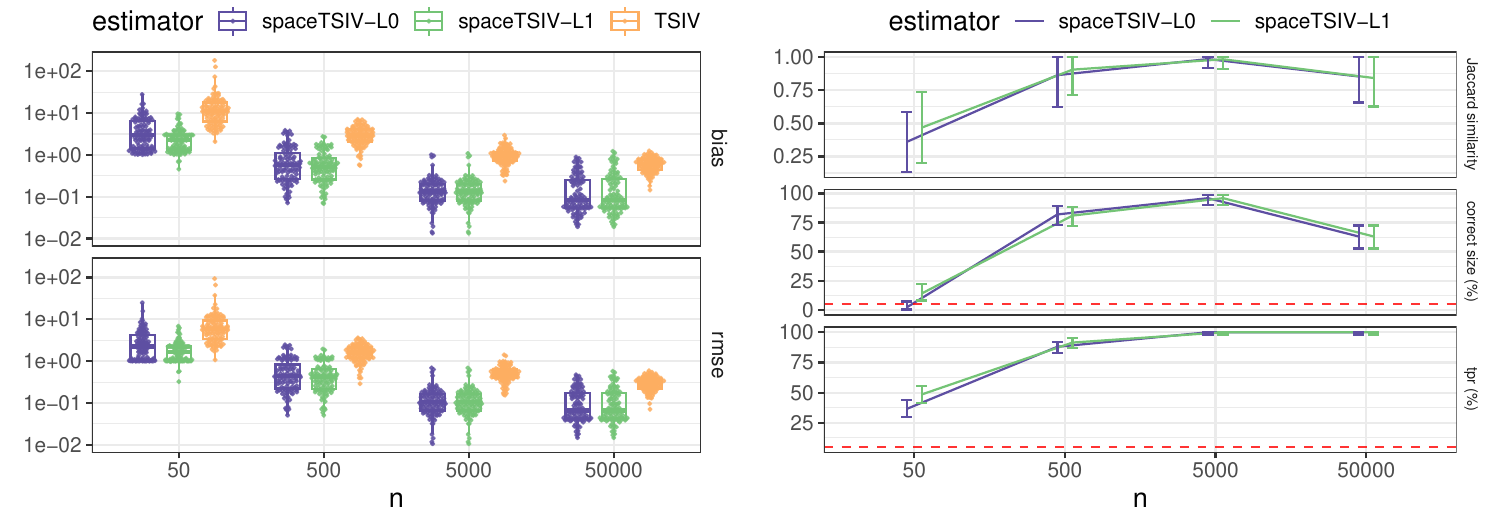}
\caption{Results using data generated by DGP3 based on $100$ repetitions. 
Left: Bias and rmse of the estimators. The y-axis is on log scale for clarity. 
Right: Average Jaccard similarity between the selected covariates and 
the true causal covariates (error bars indicate confidence intervals 
constructed by mean plus/minus one standard error), 
percentage of estimates that have the correct support size, 
and tpr 
(error bars indicate $95\%$ binomial confidence intervals).
This DGP contains $2$ invalid instruments among the $5$ instruments.}
\label{fig:dgp3_eval}
\end{figure}

\newpage

\bibliographystyleAPP{abbrvnat}
\bibliographyAPP{bibliography}

\end{document}